\journal{Discrete Applied Mathematics}
\newtheorem{theorem}{Theorem}
\newtheorem{lemma}[theorem]{Lemma}
\newtheorem{observation}[theorem]{Observation}
\newtheorem{definition}[theorem]{Definition}
\newtheorem{corollary}[theorem]{Corollary}
\newtheorem{conjecture}[theorem]{Conjecture}
\newtheorem{proposition}[theorem]{Proposition}
\newcommand{\M}{\gamma^{\text{\tiny{ID}}}}
\newcommand{\OM}{\gamma^{\text{\tiny{OID}}}}
\begin{document}
\begin{frontmatter}
\title{On the size of identifying codes in triangle-free graphs\tnoteref{t1}}
\tnotetext[t1]{This research is supported by the ANR Project IDEA 
- Identifying coDes in Evolving grAphs, ANR-08-EMER-007, 2009-2011 
and by the KBN Grant 4~T11C~047~25.}
\author[LaBRI]{Florent Foucaud\corref{cor}}
\author[LaBRI]{Ralf Klasing}
\author[LaBRI,Gdansk]{Adrian Kosowski}
\author[LaBRI]{Andr\'e Raspaud}
\address[LaBRI]{Univ. Bordeaux, LaBRI, UMR5800, F-33400 Talence, France.\\
CNRS, LaBRI, UMR5800, F-33400 Talence, France.\\
INRIA, F-33400 Talence, France.}
\address[Gdansk]{Department of Algorithms and System Modeling, Gda\'nsk University of Technology, Narutowicza 11/12, 80952 Gda\'nsk, Poland.}
\cortext[cor]{Corresponding author. E-mail: foucaud@labri.fr - Telephone: +33540003517 - Fax: +33540006669}

\begin{abstract}
In an undirected graph $G$, a subset $C\subseteq V(G)$ such that $C$
is a dominating set of $G$, and each vertex in $V(G)$ is dominated by
a distinct subset of vertices from $C$, is called an identifying code
of $G$. The concept of identifying codes was introduced by Karpovsky,
Chakrabarty and Levitin in 1998. For a given identifiable graph $G$,
let $\M(G)$ be the minimum cardinality of an identifying code in
$G$. In this paper, we show that for any connected identifiable
triangle-free graph $G$ on $n$ vertices having maximum degree
$\Delta\geq 3$, $\M(G)\le n-\tfrac{n}{\Delta+o(\Delta)}$. This bound
is asymptotically tight up to constants due to various classes of
graphs including $(\Delta-1)$-ary trees, which are known to have their
minimum identifying code of size $n-\tfrac{n}{\Delta-1+o(1)}$. We also
provide improved bounds for restricted subfamilies of triangle-free
graphs, and conjecture that there exists some constant $c$ such that
the bound $\M(G)\le n-\tfrac{n}{\Delta}+c$ holds for any nontrivial
connected identifiable graph $G$.
\end{abstract}

 \begin{keyword} 
Identifying code, Dominating set, Triangle-free graph, Maximum degree
 \end{keyword}

\end{frontmatter}

\section{Introduction}

Identifying codes, which have been introduced in~\cite{KCL98}, are
dominating sets having the additional property that each vertex of the
graph can be uniquely identified using its neighbourhood within the
identifying code. They have found numerous applications, such as
fault-diagnosis in multiprocessor networks~\cite{KCL98}, the placement
of networked fire detectors in complexes of rooms and
corridors~\cite{DRSTU03}, compact routing~\cite{CLST07}, or the
analysis of secondary RNA structures~\cite{HKSZ06}. Identifying codes
are a variation on the earlier concept of locating-dominating sets
(cf.~e.g.~\cite{CSS87,S87,RS84}), and a special case of the more
general test cover problem~\cite{DHHHLRS03,MS85}. Identifying codes
have been studied in specific graph classes such as
cycles~\cite{BCHL04,GMS06}, trees~\cite{BCHL05,CGHLMM06},
grids~\cite{KCL98} or hypercubes~\cite{JL09,RHL02}. Extremal problems
regarding the minimum size of an identifying code have been studied
in~\cite{CHL07,F09,FGKNPV10,FNP10,GM07,M06}.

Herein, we further investigate these extremal questions by giving new
upper bounds on the size of minimum identifying codes for
triangle-free graphs using their maximum degree.

\subsection{Notations and definitions}
Let $G=(V,E)$ be a simple undirected graph. We denote the vertex set
of $G$ by $V=V(G)$ and its edge set by $E=E(G)$. We also denote by
$n=|V|$ the order of $G$ and by $\Delta=\Delta(G)$ the maximum vertex
degree of $G$.

For a vertex $v$ of $G$, the \emph{ball} $B(v)$ is the set of all
vertices of $V$ which are at distance at most 1 from $v$. We
denote by $N(v)=B(v)\setminus\{v\}$, the \emph{neighbourhood} of
$v$. For a set $X$ of vertices of $G$, we define $N(X)$ to be the
union of the neighbourhoods of all vertices of $X$, that is
$N(X)=\cup_{x\in X}N(x)$. Whenever we find it necessary to emphasize
on the graph $G$ for which the neighbourhood is considered, we write
$B_G(u)$, $N_G(u)$ and $N_G(X)$. Two distinct vertices $u,v$ are
called \emph{twins} if $B(u)=B(v)$~\cite{CHHL07}. They are called
\emph{false twins} if $N(u)=N(v)$ but $u$ and $v$ are not
adjacent~\cite{BU82}.

For a subset $S$ of vertices of $G$, we denote by $G[S]$ the subgraph
of $G$ induced by $S$. A \emph{matching} $M$ of a graph $G$ is a
subset of edges of $G$ such that no two edges of $M$ have a common
vertex. If within the set of all endpoints of the edges of $M$ no
other edges than the ones of $M$ exist, we call $M$ an \emph{induced
  matching}.

Given a set $S$ of vertices of $G$, we say that a vertex $x$ of $G$ is
\emph{$S$-isolated} if $x\in S$ and no neighbour of $x$ belongs to
$S$. 
We say that vertex $u$ \emph{dominates} vertex $v$ if $v\in B(u)$. For
two subsets $C,U$ of vertices,  $C$ \emph{dominates} $U$ if
each vertex of $U$ is dominated by some vertex of $C$. Set $C\subseteq
V$ is called a \emph{dominating set} of $G$ if $C$ dominates $V$. The
vertices of a pair $u,v$ of vertices of $V$ are \emph{separated}
by some vertex $x\in V$ if $x$ dominates exactly one of the vertices
$u$ and $v$. We call $C\subseteq V$ an \emph{identifying code} of $G$
if it is a dominating set of $G$, and for all pairs $u,v$ of vertices
of $V$, $u$ and $v$ are separated by some vertex of $C$. The
latter condition can be equivalently stated as $B(u)\cap C\neq
B(v)\cap C$, or as $(B(u)\oplus B(v))\cap C\neq\emptyset$ (denoting by
$\oplus$ the symmetric difference of sets). In the following, we might
simply call an identifying code a \emph{code} and a vertex of the
code, a \emph{code vertex}. Given a graph $G$ and a subset $S$ of its
vertices, we say that a set $C\subseteq S$ is an \emph{$S$-identifying
  code} of $G$ if $C$ is an identifying code of $G[S]$.

A graph is said to be \emph{identifiable} if it admits an identifying
code. This is the case if and only if it does not contain any pair of
twins~\cite{KCL98}. An example of a graph which is not identifiable is
the complete graph $K_n$. For an identifiable graph $G$, we denote by
$\M(G)$ the cardinality of a minimum identifying code of $G$. The
problem of determining the exact value of $\M(G)$ is known to be an
NP-hard problem, even when $G$ belongs to the class of planar graphs
of maximum degree~4 having arbitrarily large girth~\cite{A10}, or to
the class of planar graphs of maximum degree~3 and
girth~9~\cite{ACHL10}.

\subsection{Main conjecture and motivation}

This paper deals with the study of paramater $\M$ and its relation
with the order and the maximum degree of graphs. This work is an
extension of earlier results.

For any graph $G$ on $n$ vertices, the lower bound
$\M(G)\geq\lceil\log_2(n+1)\rceil$ was given in~\cite{KCL98}. This
bound is tight, and all graphs reaching it have been described
in~\cite{M06}. In~\cite{KCL98}, it was also shown that the bound
$\M(G)\ge\tfrac{2n}{\Delta+2}$ holds, and all graphs reaching this
bound have been described in~\cite{F09}. This bound is an improvement
over the $\lceil\log_2(n+1)\rceil$-bound whenever
$\Delta\leq\tfrac{2n}{\lceil\log_2(n+1)\rceil}-2$, and shows that the
maximum degree has a strong influence on the minimum possible value of
$\M$.

Considering upper bounds in terms of $n$ and $\Delta$, we conjecture
that the following bound on $\M$ holds.

\begin{conjecture}\label{conj}
  There exists a constant $c$ such that for any nontrivial connected
  identifiable graph $G$ of maximum degree~$\Delta$, $\M(G)\leq
  n-\frac{n}{\Delta}+c$.
\end{conjecture}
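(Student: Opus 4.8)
The plan is to recast the problem as the construction of a large \emph{deletable set}. Since $G$ is identifiable, $V$ itself is an identifying code, and a subset $C\subseteq V$ is an identifying code precisely when its complement $S=V\setminus C$ contains no ball $B(v)$ (which guarantees domination) and no symmetric difference $B(u)\oplus B(v)$ with $u\neq v$ (which guarantees separation). Thus proving $\M(G)\le n-\tfrac{n}{\Delta}+c$ is equivalent to exhibiting a set $S$ with $|S|\ge \tfrac{n}{\Delta}-c$ that is \emph{independent} in the hypergraph $\mathcal{H}$ whose edges are all the balls together with all the symmetric differences $B(u)\oplus B(v)$. My first step would be to record the sizes of these edges: every ball has size between $2$ and $\Delta+1$ (as $G$ is connected and nontrivial), whereas a symmetric difference has size at most $2(\Delta+1)$ but can be as small as $1$.

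I would then build $S$ greedily, scanning the vertices and adding a vertex $x$ whenever doing so keeps $S$ independent in $\mathcal{H}$, i.e.\ whenever no ball and no symmetric difference becomes entirely contained in $S$. To lower-bound $|S|$ I would charge each ultimately excluded vertex $w$ to the edge that blocks it, namely an edge $e$ with $e\setminus S=\{w\}$. Since every vertex of $S$ lies in at most $\Delta+1$ balls, the ball-edges alone block only $O(\Delta)$ vertices per element of $S$, which already yields the correct order $\M(G)\le n-\Omega\!\left(\tfrac{n}{\Delta}\right)$. Pinning down the precise denominator $\Delta$ (up to the additive constant, rather than some $\Theta(\Delta)$) is far more delicate, and the symmetric-difference edges are where the real content lies: even under the triangle-free hypothesis the present methods reach only $\Delta+o(\Delta)$.

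The main obstacle is therefore controlling the pairs $u,v$ for which $B(u)\oplus B(v)$ is very small, the \emph{near-twins}: a symmetric difference of size one forces a vertex into every code, and one of size two is swallowed by deleting a single further vertex, so such pairs generate many tight constraints of high degree. In a triangle-free graph, adjacent vertices share no neighbour, which forces the symmetric difference of their balls to contain all of their private neighbours and hence to be large; this is exactly the feature that makes the bound $n-\tfrac{n}{\Delta+o(\Delta)}$ attainable, and even there the interference between nearby deletions costs the $o(\Delta)$ slack. When triangles and dense neighbourhoods are present, near-twins proliferate and the greedy charging collapses. To reach the full conjecture I would group the vertices into near-twin classes, argue that each class may safely contribute roughly a $\tfrac{1}{\Delta}$ fraction of its vertices to $S$, and then enforce the cross-class separation constraints globally --- most plausibly through a fractional (LP-duality or discharging) relaxation in which the additive constant $c$ absorbs the boundary classes and small components. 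I expect proving that near-twin classes can be thinned simultaneously without destroying cross-class separation to be the crux, and it is precisely the reason this statement remains a conjecture rather than a theorem here.
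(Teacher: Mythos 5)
You should first note that the statement you were asked to prove is labelled a \emph{conjecture} in the paper: the authors do not prove it, and your own text concedes at the end that the crux ``remains a conjecture rather than a theorem here.'' So what you have written is a research plan, not a proof, and it must be judged as such: there is a genuine gap, namely the entire handling of the symmetric-difference constraints. Your reformulation is correct and is in fact the same one the paper uses implicitly (a set $C$ is an identifying code iff its complement $S$ contains no ball $B(v)$ and no symmetric difference $B(u)\oplus B(v)$; Proposition~\ref{prop:IScode} is exactly the specialization of this to an independent $S$ in a triangle-free graph). Your greedy charging for the ball-edges alone is fine and gives $|S|\ge n/\Theta(\Delta)$, but the moment the symmetric-difference edges enter, the charging has no bound: there are up to $\binom{n}{2}$ such edges, they can have size $1$ or $2$, and a single vertex can lie in $\Omega(n)$ of them, so ``charge each excluded vertex to the edge that blocks it'' yields nothing. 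The final paragraph, where you propose to thin near-twin classes by a $1/\Delta$ fraction and enforce cross-class separation ``globally'' via an LP or discharging relaxation, is precisely the step that would constitute the proof, and it is asserted rather than argued.

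For comparison, even for the weaker triangle-free bound $n-\tfrac{n}{\Delta+o(\Delta)}$ the paper does not use greedy charging: it invokes Shearer's theorem to get an independent set of density $\tfrac{\ln\Delta-1}{\Delta}$ in the false-twin-free part $Y$, then repairs the one failure mode that an independent complement can have in a triangle-free graph (isolated edges of $G[V\setminus S]$) by constructing an $(L,R)$-quasi-identifying code around the resulting strong induced matching (Lemmas~\ref{lemma:LRdeg2} and~\ref{lemma:LR}), and handles the case of many false twins by a completely separate construction (Proposition~\ref{lemm:falsetwincode}) that deletes one representative per nontrivial false-twin class. Note that this already loses a $\ln\Delta$ factor against the conjectured denominator $\Delta$, and that the triangle-free hypothesis is used exactly where you say it is (adjacent vertices have disjoint neighbourhoods, so adjacent non-near-twins have large symmetric difference). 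Your intuition about where the difficulty lies is sound, but nothing in your proposal closes the gap between $\Delta+o(\Delta)$ and $\Delta+c$, nor removes the triangle-freeness assumption; the statement remains open.
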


It is known that there exist examples of specific families of graphs
such that $\M(G)=n-\tfrac{n}{\Delta}$ (e.g. the complete bipartite
graph $K_{\Delta,\Delta}$, Sierpi\'nski graphs~\cite{GKM10} and other
classes of graphs described in the first author's master
thesis~\cite{F09}). Other classes of graphs with slightly smaller
values of parameter $\M$ are known, including graphs having high
girth. For instance, it is shown in~\cite{BCHL05} that $\M(T)=\lceil
n-\tfrac{n}{\Delta-1+1/\Delta}\rceil$ for any complete
$(\Delta-1)$-ary tree $T$ on $n$ vertices.

For all identifiable graphs having at least one edge, the upper bound
$\M(G)\leq n-1$ holds~\cite{CHL07,GM07}. This bound is tight, in
particular for the star $K_{1,n-1}$ and other graphs which have been
recently classified in~\cite{FGKNPV10}. Hence, for graphs of very high
maximum degree (say~$\Delta=n-1$), the conjecture holds since
$n-1=n-\tfrac{n}{\Delta}+\tfrac{1}{n-1}$.

Moreover, for any connected graph $G$ of maximum degree~$2$ (i.e. when
$G$ is either a path or a cycle), the exact value of $\M(G)$ is known
(see~\cite{BCHL04,GMS06}). In this case, the bound $\M(G)\leq
\tfrac{n}{2}+\tfrac{3}{2}=n-\tfrac{n}{2}+\tfrac{3}{2}$ holds and is
reached for infinitely many values of $n$ (more precisely, this is the
case when $G$ is a cycle of odd order $n\ge 7$). Hence, the conjecture
holds for $\Delta=2$.

There is some evidence that even the case $\Delta=3$ might be
challenging. Indeed, the similar notion of \emph{identifying open
  codes} (that is, identifying codes on \emph{open balls} rather than
closed balls, i.e. vertices do not dominate or identify themselves)
was studied very recently in~\cite{HY12} for cubic graphs. Denoting
$\OM(G)$ the minimum size of an identifying open code of a graph $G$,
they are able to prove that in a cubic graph $G$ admitting an
identifying open code, $\OM(G)\leq\tfrac{3n}{4}$. Moreover, they
conjecture that the only (connected) examples reaching the bound
belong to a set of six graphs, and that otherwise,
$\OM(G)\leq\tfrac{3n}{5}$, which, if true, would be sharp. This result
is proved by using a strong connection to \emph{distinguishing
  transversals} of 3-uniform hypergraphs. It is worth noting that
using the same technique in the case of (classic) identifying codes in
cubic graphs would require to handle distinguishing transversals of
4-uniform hypergraphs, which seems to be a much more difficult task.

It was shown in~\cite{FGKNPV10} that for any connected identifiable
graph $G$ of maximum degree~$\Delta$, $\M(G)\leq
n-\tfrac{n}{\Theta(\Delta^5)}$, and if $G$ is $\Delta$-regular,
$\M(G)\leq n-\tfrac{n}{\Theta(\Delta^3)}$. In this paper, we improve
these results by showing that the conjectured bound holds
asymptotically when $G$ is triangle-free. More precisely, it is proved
in Theorem~\ref{th:upperbound_notriangle-new} that $\M(G) \leq n -
\tfrac n {\Delta+o(\Delta)}$ when $G$ is a nontrivial connected
identifiable triangle-free graph. This result strongly supports
Conjecture~\ref{conj}. Moreover, the proof is constructive and can be
used to build the corresponding code in polynomial time. For some
specific subclasses of triangle-free graphs, we are able to show
bounds of the form $\M(G) \leq n - \tfrac n {\Delta+k}$ for some
constants~$k$.

\subsection{Organization of the paper}

In Section~\ref{sec:proofideas}, we begin by giving an informal
overview of the technique and the construction used to prove our
results. In Sections~\ref{sec:prelim} to~\ref{subsec:LR}, we introduce
some definitions and preliminary results that are needed in the proof
of our main result. This result is proved in
Section~\ref{subsec:mainresult}. In Section~\ref{sec:families}, we
give improved bounds for restricted subfamilies of triangle-free
graphs. Finally, Section~\ref{sec:rem} concludes the paper with a
remark on the algorithmic consequences of our proof technique.

\section{The upper bound}

\subsection{Proof ideas}\label{sec:proofideas}

The general idea of our proof technique is to construct a sufficiently
large independent set of the graph such that some specific conditions
hold. Taking the complement of this set and performing some local
modifications yields an identifying code. This technique originates
from the following proposition, which is to give the reader a first
intuition of our technique.

\begin{proposition}\label{prop:IScode}
Let $G$ be an identifiable (not necessarily connected) triangle-free
graph, and $S$, an independent set of $G$. Then, if the following
conditions hold, $V(G)\setminus S$ is an identifying code of $G$.
\begin{enumerate}
\item $S$ contains no isolated vertex of $G$.
\item For any pair $u,v$ of vertices of $S$, $N(u)\neq N(v)$ (i.e. $S$
  does not contain any pair of false twins).
\item For each vertex $v$ of degree~1 in $G$, some vertex at
  distance~2 from $v$ does not belong to $S$.
\item The graph $G[V(G)\setminus S]$ has no isolated edges.
\end{enumerate}
\end{proposition}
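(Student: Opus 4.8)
The plan is to set $C=V(G)\setminus S$ and verify directly the two defining properties of an identifying code: that $C$ dominates $V(G)$, and that every pair of distinct vertices $u,v$ is separated by a code vertex, i.e.\ $B(u)\cap C\neq B(v)\cap C$. For domination, I would observe that since $S$ is independent, every vertex of $S$ has all its neighbours in $C$; condition~1 guarantees each such vertex has at least one neighbour, hence a code vertex in its ball, while every vertex of $C$ trivially dominates itself. Throughout, the key structural tool is triangle-freeness: if a code vertex $w$ is adjacent to one endpoint of an edge $uv$, it cannot be adjacent to the other, which is exactly what lets a single extra neighbour in $C$ act as a separating vertex.

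For separation I would split into three cases according to how many of $u,v$ lie in $S$. When both lie in $S$, independence gives $B(u)\cap C=N(u)$ and $B(v)\cap C=N(v)$, so condition~2 ($N(u)\neq N(v)$) immediately separates them. When exactly one of them, say $v$, lies in $S$: if $u\not\sim v$ then $u$ itself separates the pair, since $u\in B(u)\cap C$ but $u\notin B(v)\cap C$; if $u\sim v$ and $v$ has a second neighbour $w$, then $w\in C$, and triangle-freeness forces $w\not\sim u$, so $w$ lies in $B(v)\cap C$ but not in $B(u)\cap C$. The remaining situation is $u\sim v$ with $\deg(v)=1$, where condition~3 supplies a vertex at distance~$2$ from $v$ lying in $C$; as $u$ is the unique neighbour of $v$, this vertex is adjacent to $u$ but not to $v$, and hence separates the pair.

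Finally, when both $u,v$ lie in $C$: if $u\not\sim v$ then $u\in B(u)\cap C$ while $u\notin B(v)\cap C$, so they are separated; if $u\sim v$, then $uv$ is an edge of $G[C]$, and condition~4 ensures it is not isolated, so one endpoint, say $u$, has a further neighbour $w\in C$. Triangle-freeness again gives $w\not\sim v$, so $w$ separates $u$ and $v$. Collecting the cases shows $C$ is an identifying code. I expect the main obstacle to be the mixed adjacent case $u\sim v$ with $v\in S$: one must carefully split the degree-one subcase (handled by condition~3) from the higher-degree subcase (handled by triangle-freeness), and check in each that the chosen witness genuinely lies in $C$ and in exactly one of the two balls. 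The disconnected case requires no special treatment, since vertices in distinct components have disjoint, nonempty $B(\cdot)\cap C$ sets and are therefore automatically separated.
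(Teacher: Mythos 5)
Your proof is correct and follows essentially the same route as the paper's: a direct verification of domination and of the separation condition by case analysis, using exactly the same witnesses (a second neighbour of an $S$-vertex, the distance-two vertex from condition~3, the extra $C$-neighbour from condition~4, and triangle-freeness to ensure the witness lies in only one ball). The only difference is cosmetic: you organize the cases by membership in $S$ versus $C$ while the paper organizes them by adjacency and degree, and your split is if anything slightly cleaner in the adjacent case where one endpoint lies in $S$.
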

\begin{proof}
Let $C=V(G)\setminus S$. Since $S$ is an independent set and does not
contain any isolated vertex, $C$ is a dominating set. Let us now check
the separation condition. Let $u,v$ be an arbitrary pair of vertices
of $V(G)$. We distinguish several cases.

If $u$ and $v$ are adjacent and both have degree at least~2, since
they cannot form an isolated edge in $G[C]$, a neighbour of either one
of $u,v$ belongs to $C$ and separates them.

If $u,v$ are adjacent and one of them, say $u$, has
degree~1, since $G$ is identifiable, $v$ has at least one neighbour.
Then, by the third property of $S$, there is a vertex at distance~2 of
$u$ in $C$, separating $u$ and $v$.

If $u$ and $v$ are false twins, they do not both belong to $S$ and
hence they are separated by themselves.

Finally, if $u$ and $v$ are not adjacent and are not false twins, if
either $u$ or $v$ belong to $C$, they are separated. If both $u$ and
$v$ belong to $S$, all their neighbours belong to $C$, and since they
have distinct sets of neighbours they are separated.
\end{proof}

In order to prove our main result, we show how to build (large enough)
independent sets in triangle-free graphs such that the three first
conditions of Proposition~\ref{prop:IScode} hold (see
Lemma~\ref{lemm:goodIS}). However, it seems difficult to also ensure
that the last condition holds while keeping the size of $S$ reasonably
large. Therefore, after building $S$, we compute the set $M$ of
isolated edges of $G[V\setminus S]$ and partition $V(G)$ into the
end-vertices of $M$ (set $R$) together with their neighbours (set $L$)
on the one hand, and the remaining vertices, $V\setminus(L\cup R)$,
on the other hand. We then build a sufficiently small
\emph{$(L,R)$-quasi-identifying code} $C_1$, a variation of an
identifying code which will be defined later (see
Definition~\ref{def:quasi}). This construction is done in
Lemmas~\ref{lemma:LRdeg2} and~\ref{lemma:LR}. Setting $C_2$ as the
complement of $S$ restricted to $V\setminus(L\cup R)$, our final code
is $C_1\cup C_2$. We also combine this method with another technique
(Proposition~\ref{lemm:falsetwincode}) which is suitable for the special case
where the graph has a large number of false twins. The whole procedure
is sketched in Algorithm~\ref{alg:global_alg}.

\begin{algorithm}
  \caption{Construction of an identifying code}
  \label{alg:global_alg}
  \begin{algorithmic}[1]
    \REQUIRE{a nontrivial connected identifiable triangle-free graph $G=(V,E)$}
    \STATE Compute the set $X$ of vertices having at least one false twin
    \IF{$X$ is ``small''}
    \STATE Use Lemma~\ref{lemm:goodIS} to compute an independent set $S$ of $G$ fulfilling the three first properties listed in Proposition~\ref{prop:IScode}.
    \STATE Compute the set $R\subseteq V$ of vertices such that for each $v\in R$, $v$ has a neighbour $u$ where both $u$ and $v$ are of degree at least~2, and all the vertices of $N(u)\cup N(v)\setminus\{u,v\}$ belong to $S$.
    \STATE $L\leftarrow N(R)\setminus R$
    \STATE Compute an $(L,R)$-quasi-identifying code $C_1$ of $G$ using the constructions of Lemmas~\ref{lemma:LRdeg2} and~\ref{lemma:LR}.
    \STATE $C_2\leftarrow \left(V\setminus (L\cup R)\right)\setminus S$
    \STATE $C\leftarrow C_1\cup C_2$
    \ELSE[i.e. $X$ is ``big'']
    \STATE $C\leftarrow$ an identifying code of $G$ computed using Proposition~\ref{lemm:falsetwincode}.
    \ENDIF
    \RETURN $C$
  \end{algorithmic}
\end{algorithm}

This process is detailed in Subsection~\ref{subsec:mainresult}
(Theorem~\ref{th:upperbound_notriangle-new}). All auxiliary results
needed for this proof are developed in the next subsections.

\subsection{Preliminary results}\label{sec:prelim}

The next proposition shows how to build an identifying code of a graph
$G$ which has relatively small size when $G$ contains a large number
of false twins. We let $\equiv$ denote the \emph{false twin relation}
over $V(G)$, where $u\equiv v$ if $u,v$ are false twins. This relation
is an equivalence relation. We call an equivalence class of $\equiv$
\emph{nontrivial} if it has at least two elements.

\begin{proposition}\label{lemm:falsetwincode}
Let $G$ be a nontrivial connected identifiable triangle-free graph on
$n$ vertices and maximum degree~$\Delta$ non isomorphic to $C_4$. Let
$\mathcal{F}=\{F_1,\ldots,F_{|\mathcal{F}|}\}$ be the set of all
nontrivial equivalence classes over $\equiv$ in $G$. Then $G$ has an
identifying code of size at most $n-|\mathcal{F}|$.
\end{proposition}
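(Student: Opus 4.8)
The plan is to build the code by starting from the whole vertex set and deleting exactly one vertex from each nontrivial false-twin class, so that the resulting set $C$ has size $n-|\mathcal{F}|$; the bulk of the work is then checking that such a $C$ is an identifying code. Before doing so, I would record two structural facts about false-twin classes that triangle-freeness forces and that drive the whole argument. First, for every class $F_i$ the common neighbourhood $N(F_i)$ is an independent set, since a shared neighbour of two adjacent vertices of $N(F_i)$ would create a triangle with any member of $F_i$. Second, any two distinct nontrivial classes $F_i,F_j$ are either completely non-adjacent or completely joined: a single edge $uv$ with $u\in F_i$, $v\in F_j$ propagates, via $N(u')=N(u)$ and $N(v')=N(v)$, to all pairs in $F_i\times F_j$. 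I would also note the matching lower-bound observation that two false twins $u,v$ satisfy $B(u)\oplus B(v)=\{u,v\}$, so any identifying code must contain all but at most one vertex of each class; this is exactly why deleting one vertex per class is the natural (and, per class, best possible) target.

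Let $S$ contain one representative of each nontrivial class and set $C=V\setminus S$, so $|C|=n-|\mathcal{F}|$. Domination is the easy part: a kept vertex dominates itself, while a deleted $s\in F_i$ is dominated because either $F_i$ is completely joined to some nontrivial class $F_j$, in which case an (existing) kept member of $F_j\subseteq N(F_i)$ is a neighbour of $s$ in $C$, or else, by the second structural fact, no vertex of $N(F_i)$ lies in a nontrivial class, so $N(F_i)\subseteq C$.

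The core of the proof is separation, which I would organise by the type of the pair $u,v$. Same-class pairs are separated since at least one of the two is kept and $B(u)\oplus B(v)=\{u,v\}$. For non-adjacent pairs, any pair with a kept endpoint is separated by that endpoint; for a non-adjacent pair with both endpoints deleted (necessarily from different classes, hence not false twins, hence $N(F_i)\neq N(F_j)$), I would pick a witness in $N(F_i)\setminus N(F_j)$ (or the symmetric set) and argue it can be chosen in $C$: a witness in a trivial class is never deleted, while a witness in a nontrivial class drags its whole class into $N(F_i)\setminus N(F_j)$, so a kept twin of it survives as a witness. The only genuinely delicate case is an adjacent pair, where the separators are exactly the private neighbours and neither endpoint separates the pair. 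Here the key trick is that a private neighbour lying in a nontrivial class keeps a twin that is again a private neighbour in $C$, unless that twin coincides with the other endpoint of the edge.

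I expect this last exception to be the main obstacle, and I would show it is the whole story: tracing when an adjacent pair $u,v$ has no private neighbour in $C$ forces, via the twin and degree bookkeeping above together with the completely-joined structure, that $G$ is either $P_3$ (the deleted vertex being one of two leaves at a degree-$2$ centre) or $C_4$ (both endpoints having degree $2$ and lying in size-$2$ classes joined to each other). Since $C_4$ is excluded by hypothesis, the construction works for every graph except $P_3$, which I would dispatch by hand: $P_3$ has a single nontrivial class and the code obtained by deleting its centre has size $2=n-|\mathcal{F}|$. Assembling the cases yields an identifying code of size $n-|\mathcal{F}|$, as claimed.
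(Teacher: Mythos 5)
Your construction is exactly the paper's: keep $C=V(G)\setminus S$ where $S$ holds one representative of each nontrivial false-twin class, and dispatch $P_3$ by hand, so the code and the bound $n-|\mathcal{F}|$ coincide with the paper's. Where you genuinely diverge is in verifying separation. The paper fixes a separating vertex $z$ for the pair $x,y$ and chases $z$, its false twin, and successive further neighbours through an informal sequence of cases, invoking the exclusion of $P_3$ and $C_4$ mid-chase (including a ``repeat the argument'' step that leaves some bookkeeping to the reader). You instead classify pairs by adjacency, exploit the dichotomy that two nontrivial classes are either completely joined or completely non-adjacent, and isolate the adjacent pairs as the only delicate case, showing that a failure there \emph{forces} $G\cong P_3$ or $G\cong C_4$. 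That last claim is correct and completes your argument: if every private neighbour of an adjacent pair $u,v$ is deleted, then (using that a deleted private neighbour $w$ of $u$ has all its kept twins again private neighbours of $u$ unless they equal $v$) each such $w$ must have class exactly $\{w,v\}$, and symmetrically for $v$; since classes are disjoint this gives $\deg u\le 2$ and $\deg v\le 2$, and connectivity then pins $G$ down to $P_3$ or $C_4$. Your version buys a cleaner explanation of why precisely these two graphs are excluded; the paper's is shorter on the page but less transparent. Both are correct and yield the same conclusion.
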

\begin{proof}
First, we may suppose that $G$ is not isomorphic to $P_3$ since in
that case the lemma holds: $P_3$ has its minimum identifying
code of size~2 and $|\mathcal{F}|=1$.

For each $F_i\in\mathcal{F}$, $1\leq i\leq |\mathcal{F}|$, let $x_i$
be an arbitrary vertex of $F_i$, and let
$X=\cup_{i=1}^{|\mathcal{F}|}x_i$. We claim that if $G$ is not
isomorphic to $P_3$ or $C_4$, $C=V(G)\setminus X$ is an identifying
code of $G$. First, observe that $C$ is a dominating set of $G$. Now,
consider two vertices $x,y$. We need to show that they are separated
from each other.

If $x,y$ are false twins, the one belonging to the code separates
them. Otherwise, since $G$ is identifiable, there is a vertex $z$
which is able to separate them, say $z$ belongs to $B(x)$, but not to
$B(y)$. If $z$ belongs to the code, we are done. Otherwise, $z\in
X$.

If $z$ is a neighbour of $x$, consider a false twin
$z'$ of $z$. If $z'\neq y$, $z'$ belongs to the code and separates
$x,y$, so we are done. Otherwise, since $G$ is not isomorphic to $P_3$
and $z,y$ are false twins, one of $x$ or $y$ has another neighbour,
say $t$. If $t$ belongs to the code we are done. Otherwise, if $t$ is
a neighbour of $y$, since $G$ is not isomorphic to $C_4$, either $x$
or $y$ has another neighbour. We can repeat the argument but this
time, either this neighbour or its false twin separates $x,y$. If $t$
is a neighbour of $x$, $t$ cannot be a false twin of $y$ and therefore
either $t$ or its false twin separates $x,y$.

Finally, if $z=x$, $x$ and $y$ are not adjacent. But since they are
not false twins, there is another vertex, say $u$, with
$u\not\in\{x,y\}$, such that $u$ is adjacent to exactly one of
$x,y$. Now, either $u$ belongs to the code and we are done, or a false
twin of $u$ (which also is adjacent to exactly one of $x,y$), which
completes the proof.
\end{proof}

In the proof of our main result, we first construct an independent set
$S$ having some given properties. Then, we consider the set
$V(G)\setminus S$ as a potential code, and modify it in order to
identify those vertices which form isolated edges in $G[V(G)\setminus
  S]$. The following definition introduces a notion which helps to
formalize this situation.

\begin{definition}\label{def:LRmatch}
  Given a graph $G$ together with an induced matching $M$ of $G$, we
  denote by $R(M)$, the set of end-vertices of the edges of $M$ and by
  $L(M)$, the set of neighbours of the vertices of $R(M)$:
  $L(M)=N(R(M))\setminus R(M)$. $M$ is called a \emph{strong induced
    matching} if the following holds:
  \begin{itemize}
  \item $L(M)$ is an independent set in $G$.
  \item Each vertex $x$ of $R(M)$ has degree at least~2 in $G$ (i.e. $N(x)\cap L(M)\neq\emptyset$).
  \end{itemize}
\end{definition}

An illustration of a strong induced matching is given in
Figure~\ref{fig:LRM}. Note that in some graphs, one cannot necessarily
find a strong induced matching. Indeed, if $G$ is triangle-free, each
edge of such a matching must belong to at least some induced path on four
vertices.

\begin{figure}
\centering
\scalebox{0.5}{\begin{tikzpicture}
\path (0,0) node[draw,shape=circle] (l21) {};
\path (0,1) node[draw,shape=circle] (l22) {};
\path (0,2) node[draw,shape=circle] (l23) {};
\path (0,3) node[draw,shape=circle] (l24) {};
\path (0,4) node[draw,shape=circle] (l25) {};
\path (0,5) node[draw,shape=circle] (l26) {};
\path (0,6) node[draw,shape=circle] (l27) {};
\path (0,7) node[draw,shape=circle] (l28) {};
\path (0,8) node[draw,shape=circle] (l11) {};
\path (0,9) node[draw,shape=circle] (l12) {};
\path (0,10) node[draw,shape=circle] (l13) {};
\path (0,11) node[draw,shape=circle] (l14) {};
\path (0,12) node[draw,shape=circle] (l15) {};
\path (3,1) node[draw,shape=circle] (r21) {};
\path (3,2) node[draw,shape=circle] (r22) {};
\path (3,3) node[draw,shape=circle] (r23) {};
\path (3,4) node[draw,shape=circle] (r24) {};
\path (3,5) node[draw,shape=circle] (r25) {};
\path (3,6) node[draw,shape=circle] (r26) {};
\path (3,7) node[draw,shape=circle] (r11) {};
\path (3,8) node[draw,shape=circle] (r12) {};
\path (3,9) node[draw,shape=circle] (r13) {};
\path (3,10) node[draw,shape=circle] (r14) {};
\path (3,11) node[draw,shape=circle] (r15) {};
\path (3,12) node[draw,shape=circle] (r16) {};
\draw (l21) -- (r21)
      (l22) -- (r21)
      (l23) -- (r22)
      (l24) -- (r22)
      (l24) -- (r23)
      (l25) -- (r23)
      (l26) -- (r24)
      (l27) -- (r24)
      (l28) -- (r24)
      (l11) -- (r25)
      (l12) -- (r26)
      (l13) -- (r26)
      (l11) -- (r11)
      (l12) -- (r13)
      (l13) -- (r12)
      (l13) -- (r14)
      (l14) -- (r15)
      (l15) -- (r16);
\draw[dashed] (l23) -- +(-2,0);
\draw[dashed] (l21) -- +(-2,0);
\draw[dashed] (l23) -- +(-2,0.5);
\draw[dashed] (l23) -- +(-2,-0.5);
\draw[dashed] (l25) -- +(-2,0);
\draw[dashed] (l28) -- +(-2,0);
\draw[dashed] (l13) -- +(-2,-0.5);
\draw[dashed] (l13) -- +(-2,0.5);
\draw[dashed] (l11) -- +(-2,0);
\draw[line width=2pt]
      (r21) -- (r22)
      (r23) -- (r24)
      (r25) -- (r26)
      (r11) -- (r12)
      (r13) -- (r14)
      (r15) -- (r16);
\draw (0,5.8) ellipse (1.3cm and 7.0cm)
      (3,6.3) ellipse (1.3cm and 6.5cm)
      (-0.8,6) node {$L(M)$}
      (3.8,6.5) node {$R(M)$};
\end{tikzpicture}}
\caption{Example of a strong induced matching $M$ (thick edges) in a triangle-free graph}
\label{fig:LRM}
\end{figure}
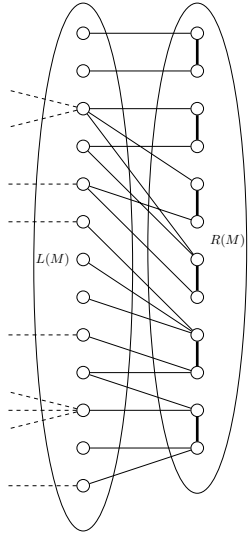

Note that in any triangle-free graph $G$ having a strong induced
matching $M$, $G[L(M)\cup R(M)]$ has no isolated edge (i.e. two
adjacent vertices of degree~1). Since in a triangle-free graph, a pair
of twins necessarily forms an isolated edge, the following observation
is immediate.

\begin{observation}\label{obs:LRM-twinfree}
  Let $G$ be a triangle-free graph having a strong induced
  matching $M$. Then $G[L(M)\cup R(M)]$ is identifiable.
\end{observation}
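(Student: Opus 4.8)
The plan is to combine two facts, both essentially supplied by the surrounding discussion: that identifiability is equivalent to the absence of twins, and that in a triangle-free graph twins are extremely restricted. Since a graph is identifiable if and only if it contains no pair of twins, it suffices to show that $H:=G[L(M)\cup R(M)]$ is twin-free.

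First I would pin down what a pair of twins can look like in a triangle-free graph. Suppose $u,v$ are distinct twins, i.e. $B(u)=B(v)$. Because $u\in B(u)=B(v)$ and $u\neq v$, we get $u\in N(v)$, so $u$ and $v$ are adjacent. Now take any $w\in N(u)$ with $w\neq v$; then $w\in B(u)=B(v)$ and $w\neq v$ force $w\in N(v)$, so $\{u,v,w\}$ would induce a triangle, contradicting triangle-freeness. Hence $N(u)=\{v\}$ and, symmetrically, $N(v)=\{u\}$, so any pair of twins in a triangle-free graph is an isolated edge (a $K_2$ component). This reduces the problem to showing that $H$ contains no isolated edge, in the sense of two adjacent vertices each of degree~$1$ in $H$.

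Next I would analyse the edges of $H$ using the strong induced matching structure. Since $M$ is an induced matching, the only edges of $H$ with both endpoints in $R(M)$ are the matching edges themselves; since $L(M)$ is independent (the first strong condition), there are no edges of $H$ inside $L(M)$; the remaining edges join $L(M)$ to $R(M)$. Consequently, every vertex $x\in R(M)$ has degree at least~$2$ in $H$: it is incident to its matching edge (a partner lying in $R(M)$) and, by the second strong condition $N(x)\cap L(M)\neq\emptyset$, to at least one vertex of $L(M)$, and these two neighbours are distinct. Therefore no vertex of $R(M)$ can be an endpoint of an isolated edge. An isolated edge of $H$ would then have both endpoints in $L(M)$, contradicting the independence of $L(M)$; hence $H$ has no isolated edge.

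Combining the two steps, $H$ admits no pair of twins and is therefore identifiable. The argument is short, and the only point requiring care is the degree count in $H$: one must use both conditions in the definition of a strong induced matching (the independence of $L(M)$ and the degree-at-least-$2$ requirement on $R(M)$) together with the induced property of $M$, in order to guarantee that every vertex of $R(M)$ retains degree at least~$2$ after restricting to $L(M)\cup R(M)$.
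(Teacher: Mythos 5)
Your proof is correct and follows exactly the route the paper takes: it observes that in a triangle-free graph any pair of twins forms an isolated edge, and then checks that $G[L(M)\cup R(M)]$ has no isolated edge because every vertex of $R(M)$ keeps degree at least~2 in the induced subgraph while $L(M)$ is independent. The paper states this as immediate; you have simply filled in the same details.
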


In order to construct small identifying codes of a triangle-free graph
$G$ having a strong induced matching $M$, we will construct special
codes for the subgraph of $G$ induced by set $L(M)\cup R(M)$. These
codes are defined as follows.

\begin{definition}\label{def:quasi}
  Let $G$ be a triangle-free identifiable graph having a strong induced matching
  $M$ with $L=L(M)$ and $R=R(M)$. Let $G'=G[L\cup R]$. We say that $C\subseteq L\cup R$ is
  an \emph{$(L,R)$-quasi-identifying code} of $G$ if:
\begin{enumerate}
\item Each vertex of $L\cup R$ is dominated by some vertex of $C$.\label{def:quasi-1}
\item For each pair $u,v$ of vertices in $L\cup R$, $C\cap
  B_{G'}(u)\neq C\cap B_{G'}(v)$, unless $u$ and $v$ both belong to
  $L$ and $N_{G'}(u)=N_{G'}(v)$.\label{def:quasi-2}
\item For each edge $e$ of $M$, at least one of the vertices of $e$
  belongs to $C$.\label{def:quasi-3}
\end{enumerate}
\end{definition}

Note that because of condition number~\ref{def:quasi-2} of 
Definition~\ref{def:quasi}, an $(L,R)$-quasi-identifying code of $G$ is not
necessarily an $(L\cup R)$-identifying code of $G$. Conversely,
because of condition number~\ref{def:quasi-3}, an $(L\cup R)$-identifying
code of $G$ might not be an $(L,R)$-quasi-identifying code of $G$.

The following proposition shows that we can use an
$(L,R)$-quasi-identifying code of $G$ to construct a valid identifying
code of $G$.

\begin{proposition}\label{prop:well-id}
  Let $G=(V,E)$ be an identifiable triangle-free graph having a strong
  induced matching $M$, with $L=L(M)$ and $R=R(M)$, and suppose that
  $L$ does not contain any pair of false twins in $G$. Also suppose that there exists
  an $(L,R)$-quasi-identifying code $C_1$ of $G$ without
  $C_1$-isolated vertices and a $(V\setminus(L\cup R))$-identifying
  code $C_2$ of $G$ where all the neighbours of vertices of $L$ within
  $V\setminus(L\cup R)$ belong to $C_2$.\footnote{Note that if a
    $(V\setminus(L\cup R))$-identifying code $C$ exists
    (i.e. $G[V\setminus(L\cup R)]$ is identifiable), then adding all
    neighbours of vertices of $L$ to $C$ yields an identifying
    code. In fact, any superset of an identifying code is still an
    identifying code.} Then, $C_1\cup C_2$ is an identifying code of
  $G$.
\end{proposition}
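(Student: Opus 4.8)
The plan is to show that $C = C_1 \cup C_2$ is both a dominating set and separates every pair of vertices. Domination is immediate: $C_1$ dominates all of $L \cup R$ by condition~\ref{def:quasi-1} of Definition~\ref{def:quasi}, and $C_2$ dominates all of $V \setminus (L \cup R)$ since it is a $(V \setminus (L\cup R))$-identifying code, hence a dominating set of that induced subgraph. For separation, I would fix an arbitrary pair $u, v$ of distinct vertices of $V$ and split into cases according to where $u$ and $v$ lie relative to the partition $V = (L \cup R) \sqcup (V \setminus (L\cup R))$. The essential point throughout is that separation within the induced subgraphs, provided by $C_1$ and $C_2$, must be \emph{lifted} to separation in the whole graph $G$: a separating vertex found inside $G'=G[L\cup R]$ (or inside $G[V\setminus(L\cup R)]$) is still a legitimate separating vertex in $G$, since it dominates the same vertices, and the codes live inside the respective vertex subsets.

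\medskip

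First I would treat the case where both $u, v \in V \setminus (L\cup R)$. Here $C_2$ is a $(V\setminus(L\cup R))$-identifying code, so some vertex of $C_2 \subseteq V\setminus(L\cup R)$ dominates exactly one of $u,v$ within that subgraph; because this vertex's adjacencies to $u,v$ are unchanged in $G$, it separates $u,v$ in $G$. Symmetrically, if both $u,v \in L\cup R$, I would use $C_1$. The subtlety is condition~\ref{def:quasi-2}, which permits $C_1$ to \emph{fail} to separate a pair $u,v$ only when $u,v \in L$ with $N_{G'}(u) = N_{G'}(v)$. This is exactly where the two extra hypotheses enter: since $L$ contains no pair of false twins \emph{in $G$}, two such vertices $u,v$ of $L$ with equal neighbourhoods in $G'$ must differ in their neighbourhoods in $G$, so one of them has a neighbour $w \in V\setminus(L\cup R)$ that is not adjacent to the other. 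By the hypothesis on $C_2$, \emph{all} neighbours of vertices of $L$ lying in $V\setminus(L\cup R)$ belong to $C_2 \subseteq C$; hence $w \in C$ separates $u$ and $v$. (Here I rely on triangle-freeness to exclude $u,v$ being adjacent, so that their being false twins is the only remaining obstruction.)

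\medskip

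The remaining case is a \emph{mixed} pair, $u \in L\cup R$ and $v \in V\setminus(L\cup R)$. I would separate these by exhibiting a code vertex adjacent to exactly one of them. If $u$ and $v$ are non-adjacent, then any vertex of $B(u)$ that lies in $C$ and is not adjacent to $v$ will do; a natural candidate is $u$ itself when $u \in C$, or a $C_1$-neighbour of $u$ otherwise — this is where the hypothesis that $C_1$ has no $C_1$-isolated vertices is used, guaranteeing that a vertex $u \in C_1$ has a code neighbour inside $L\cup R$, which cannot be adjacent to $v \in V\setminus(L\cup R)$ without creating an edge or triangle that the structure of $M$ forbids. If $u$ and $v$ \emph{are} adjacent, I would note $u \in L$ (vertices of $R$ have all their neighbours in $L \cup R$ by construction of $L = N(R)\setminus R$), and then the domination and no-isolation properties supply a code vertex adjacent to $u$ but, by triangle-freeness, not to $v$, or conversely a code vertex adjacent to $v$ but not $u$ coming from $C_2$.

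\medskip

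I expect the main obstacle to be the careful bookkeeping in the mixed case and in the false-twin subcase of the $L\cup R$ case: one must verify that each ``lifted'' separating vertex genuinely belongs to $C_1 \cup C_2$ and that no adjacency is inadvertently introduced across the partition boundary. Triangle-freeness, the definition of a strong induced matching (in particular that $L(M)$ is independent and $R(M)$-vertices have degree at least~$2$), the no-$C_1$-isolated-vertex condition, and the requirement that $C_2$ swallow all $L$-to-outside neighbours are precisely the ingredients that make every case close; the proof is essentially an exhaustive but elementary case analysis, with the false-twin subcase being the one place where a hypothesis is used in a non-obvious way.
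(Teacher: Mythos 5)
Your overall architecture is the same as the paper's: split pairs according to the partition $V=(L\cup R)\sqcup(V\setminus(L\cup R))$, lift separation from the induced subgraphs, and handle the exceptional pairs $u,v\in L$ with $N_{G'}(u)=N_{G'}(v)$ via the no-false-twins hypothesis together with the requirement that all neighbours of $L$ inside $V\setminus(L\cup R)$ lie in $C_2$. That exceptional case --- the only place a hypothesis is used non-trivially --- you handle correctly. (One small remark: you do not need triangle-freeness to rule out adjacency of such $u,v$; they both lie in $L$, which is independent by the definition of a strong induced matching.)

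There is, however, a genuine flaw in your treatment of the mixed, non-adjacent case. You propose to separate $u\in L\cup R$ from $v\in V\setminus(L\cup R)$, when $u\notin C$, by ``a $C_1$-neighbour of $u$,'' claiming such a neighbour cannot be adjacent to $v$ ``without creating an edge or triangle that the structure of $M$ forbids.'' This is false when the chosen neighbour lies in $L$: a vertex of $L$ may perfectly well be adjacent both to $u\in R$ and to $v\in V\setminus(L\cup R)$ (the left-hand side of Figure~\ref{fig:LRM} shows exactly such vertices), and since $u\not\sim v$ in this subcase no triangle arises; such a witness then dominates both $u$ and $v$ and separates nothing. The repair is the key observation the paper makes explicit: every vertex $x$ of $L\cup R$ is dominated by some vertex of $R\cap C_1$. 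Indeed, for $x\in R$ this follows from condition~\ref{def:quasi-3} of Definition~\ref{def:quasi} (one endpoint of its matching edge is in $C_1$); for $x\in L\setminus C_1$ it follows from condition~\ref{def:quasi-1} plus the independence of $L$ (its dominating code vertex must be a neighbour in $R$); and for $x\in L\cap C_1$ it follows from the absence of $C_1$-isolated vertices plus the independence of $L$. Since $L=N(R)\setminus R$ forces $N(R)\subseteq L\cup R$, a witness in $R\cap C_1$ dominates no vertex of $V\setminus(L\cup R)$, and it settles \emph{all} mixed pairs at once, adjacent or not. You should replace your ad hoc choice of witness by this one; as written, the step does not go through.
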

\begin{proof}
  We show that each pair of vertices of $G$ is separated. Since $C_2$
  is a $(V\setminus(L\cup R))$-identifying code, all pairs of vertices
  of $V\setminus(L\cup R)$ are separated. Since $C_1$ is
  $(L,R)$-quasi-identifying and there are no $C_1$-isolated vertices,
  each vertex $x$ of $L\cup R$ is dominated by at least one vertex of
  $R\cap C_1$ (see points number~\ref{def:quasi-1}
  and~\ref{def:quasi-3} of Definition~\ref{def:quasi}), which we
  denote $f_{C_1}(x)$. Moreover, by definition of sets $L$ and $R$, no
  vertex of $V\setminus(L\cup R)$ is dominated by a vertex of
  $R$. Therefore, all pairs of vertices $x,y$ with $x\in L\cup R$ and
  $y\in V\setminus(L\cup R)$ are separated by $f_{C_1}(x)$. It remains
  to check the pairs of vertices of $L\cup R$. By contradiction,
  suppose there are two vertices $u,v$ of $L\cup R$ which are not
  separated. By point number~\ref{def:quasi-2} of
  Definition~\ref{def:quasi}, $u$ and $v$ belong to $L$ and have the
  same neighbourhood within $L\cup R$. But since we assumed that they
  are not false twins and all their neighbours in $V\setminus(L\cup
  R)$ are in $C_2$, $u$ and $v$ are separated by the neighbours they
  do not have in common, a contradiction.
\end{proof}

\subsection{Building large independent sets in triangle-free graphs}\label{subsec:IS}

In order to use Proposition~\ref{prop:IScode}, we need to build (large
enough) independent sets in triangle-free graphs. We use the following
result of J.~Shearer~\cite{S83} to show that triangle-free graphs have
large independent sets which fulfill some useful conditions. Note that
the proof of the following theorem is constructive.

\begin{theorem}[\cite{S83}]\label{thm:shearer}
Let $G$ be a triangle-free graph on $n$ vertices and average
degree~$\overline{d}$. Then $G$ has an independent set of size at
least $\tfrac{\overline{d}(\ln
  \overline{d}-1)+1}{(\overline{d}-1)^2}n$.
\end{theorem}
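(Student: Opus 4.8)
The plan is to prove a stronger \emph{local} statement, from which the theorem follows by convexity. Define $f\colon[0,\infty)\to\mathbb{R}$ by $f(x)=\frac{x(\ln x-1)+1}{(x-1)^2}$ for $x\neq 1$, extended continuously by $f(1)=\tfrac12$ and $f(0)=1$, so that $f(\overline d)$ is exactly the claimed ratio. I would first reduce the theorem to the inequality
\begin{equation}
\alpha(G)\ \ge\ \sum_{v\in V(G)} f(d(v)),\label{eq:local}
\end{equation}
where $d(v)$ is the degree of $v$ and $\alpha(G)$ denotes the independence number. Granting~\eqref{eq:local}, one checks that $f$ is convex on $[0,\infty)$, so Jensen's inequality gives $\frac1n\sum_{v}f(d(v))\ge f\!\left(\frac1n\sum_v d(v)\right)=f(\overline d)$, and hence $\alpha(G)\ge n\,f(\overline d)$, which is the stated bound. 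Verifying convexity of $f$ is a routine, if slightly tedious, one-variable computation that I would isolate from the main argument.

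To establish~\eqref{eq:local} I would use a probabilistic construction of an independent set, which also makes the argument constructive as the excerpt promises. Assign to each vertex $v$ an independent weight $w_v$ drawn uniformly from $[0,1]$, process the vertices in increasing order of weight, and greedily add $v$ to a set $I$ whenever no already-processed neighbour of $v$ lies in $I$. The resulting set $I$ is independent, and the whole procedure is a deterministic function of the weights, so a standard conditional-expectation (derandomization) argument produces an explicit independent set of size at least $\mathbb{E}|I|=\sum_v\Pr[v\in I]$. It therefore suffices to prove that $\Pr[v\in I]\ge f(d(v))$ for every vertex $v$; note that this already improves on the general Caro--Wei bound $\Pr[v\in I]\ge \frac{1}{d(v)+1}$, and the gap is precisely what triangle-freeness will buy.

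The per-vertex estimate is where triangle-freeness is essential, and I expect it to be the main obstacle. Conditioning on $w_v=t$, the vertex $v$ is selected exactly when none of its neighbours of smaller weight ends up in $I$. Since $G$ is triangle-free, the neighbourhood $N(v)$ is an independent set, so no neighbour of $v$ can directly block another neighbour; the only dependencies run through the second neighbourhoods. This decoupling lets me bound $\Pr[v\in I\mid w_v=t]$ from below by a quantity depending only on $t$ and on the behaviour of the $d(v)$ neighbours, and minimising over the admissible configurations leads to a self-referential integral (equivalently, a differential relation) for the selection probability. Integrating over $t\in[0,1]$ and optimising over how the neighbours are distributed is exactly what produces the logarithmic term in $f$; setting up this recursion correctly and proving that its optimum is no smaller than $f(d(v))$ is the technically demanding heart of the proof. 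Once~\eqref{eq:local} is in hand, the convexity step above completes the argument.
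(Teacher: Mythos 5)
The paper offers no proof of this statement: it is quoted directly from Shearer~\cite{S83}, whose own argument is a vertex-deletion induction --- pick a vertex $v$, delete $B(v)$ (triangle-freeness makes $N(v)$ independent, so many edges are removed), apply the induction hypothesis to the smaller graph, and average over the choice of $v$ using the convexity and monotonicity of $f$ together with a one-variable differential inequality of the form $(d^2-d)f'(d)+(d+1)f(d)\ge 1$. Your proposal takes a genuinely different route (random-permutation greedy plus a local bound), but it is not a complete proof.

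The gap is the per-vertex inequality $\Pr[v\in I]\ge f(d(v))$, which you yourself label ``the technically demanding heart of the proof'' and then do not carry out: the self-referential integral is never written down, the minimisation over neighbour configurations is never performed, and no argument is given that the optimum of that recursion dominates $f(d(v))$. This is not a routine omission that can be isolated like the convexity check. The local statement $\alpha(G)\ge\sum_{v}f(d(v))$ is strictly stronger than Shearer's averaged bound, and it is not a known easy consequence of triangle-freeness; the dependencies you dismiss (``the only dependencies run through the second neighbourhoods'') are exactly what make every known analysis of this greedy procedure on triangle-free graphs delicate, and those analyses do not yield Shearer's exact function $f$ vertex by vertex. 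As written, the proposal reduces the theorem to an unproved claim that is at least as hard as the theorem itself. The Jensen/derandomization framing is fine and would be a nice packaging if the local bound were available, but to actually prove the cited statement you should follow the deletion induction: the only analytic input is the convexity, monotonicity, boundary values $f(0)=1$, $f(1)=\tfrac12$, and the differential inequality above, all of which really are routine computations.
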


The following corollary of Theorem~\ref{thm:shearer} is an approximate
bound which is easier to deal with and which is tight enough for our
purposes. It follows from the facts that
$\overline{d}(G)\leq\Delta(G)$ and that when $x>1$, the
function $\tfrac{x(\ln x-1)+1}{(x-1)^2}$ is decreasing. Moreover in
that case, $\tfrac{x(\ln x-1)+1}{(x-1)^2}\geq \tfrac{\ln x -1}{x}$ and
for $x\geq 3$, $\tfrac{\ln x -1}{x}>0$.

\begin{corollary}\label{cor:shearer}
Let $G$ be a triangle-free graph on $n$ vertices and maximum
degree~$\Delta\geq 3$. Then $G$ has an independent set of size at least
$\tfrac{\ln \Delta-1}{\Delta}n$.
\end{corollary}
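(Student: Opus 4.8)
The plan is to derive Corollary~\ref{cor:shearer} directly from Shearer's bound (Theorem~\ref{thm:shearer}) by a purely analytic comparison, exactly as the remark preceding the statement suggests. Write $f(x)=\tfrac{x(\ln x-1)+1}{(x-1)^2}$ for the function appearing in Theorem~\ref{thm:shearer}, so that $G$ has an independent set of size at least $f(\overline{d})\,n$ whenever $\overline{d}\neq 1$. The goal is to replace the dependence on the average degree $\overline{d}$, which is hard to control, by a bound in terms of $\Delta$ only, using the trivial inequality $\overline{d}\leq\Delta$.

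First I would treat the main regime $\overline{d}>1$. Here the two ingredients are (i) that $f$ is strictly decreasing on $(1,\infty)$, so that $\overline{d}\leq\Delta$ yields $f(\overline{d})\geq f(\Delta)$, and (ii) the pointwise inequality $f(x)\geq\tfrac{\ln x-1}{x}$ for $x>1$. Combining (i) and (ii) at $x=\Delta$ immediately gives $f(\overline{d})\geq f(\Delta)\geq\tfrac{\ln\Delta-1}{\Delta}$, which is the desired bound. I would prove (ii) by clearing the (positive) denominators: it is equivalent to $g(x):=(\ln x-1)(2x-1)+x\geq 0$ for $x\geq 1$, and since $g(1)=0$, $g'(1)=0$ and $g''(x)=\tfrac{1}{x^2}+\tfrac{2}{x}>0$, the function $g$ is nonnegative throughout. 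The monotonicity in (i) is checked the same way, by differentiating $f$. These are routine calculus verifications that I would relegate to a short computation.

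The only genuinely separate case is the sparse regime $\overline{d}\leq 1$, for which Theorem~\ref{thm:shearer} and the monotonicity argument do not apply, since the preceding remark tacitly assumes $x>1$. This case is nonetheless trivial: a graph with $\overline{d}\leq 1$ satisfies $\alpha(G)\geq\tfrac{n}{\overline{d}+1}\geq\tfrac{n}{2}$ by the Caro--Wei/Tur\'an bound, while the function $x\mapsto\tfrac{\ln x-1}{x}$ attains its maximum $1/e^2<\tfrac12$ at $x=e^2$; hence $\tfrac{\ln\Delta-1}{\Delta}\,n<\tfrac{n}{2}\leq\alpha(G)$ and the conclusion holds a fortiori. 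Positivity of the right-hand side for $\Delta\geq 3$, which guarantees the statement is not vacuous, follows from $\ln 3>1$.

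I do not expect any real obstacle: the statement is a soft corollary of a cited theorem, and its entire content is the elementary inequality $f(x)\geq\tfrac{\ln x-1}{x}$ together with the bookkeeping needed to cover $\overline{d}\leq 1$. If anything, the subtle point is simply not to forget the sparse case, since the comparison function is only monotone for $x>1$.
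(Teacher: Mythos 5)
Your argument is correct and takes essentially the same route as the paper, which obtains the corollary from Shearer's theorem via $\overline{d}\leq\Delta$, the monotonicity of $f(x)=\tfrac{x(\ln x-1)+1}{(x-1)^2}$ on $(1,\infty)$, and the pointwise inequality $f(x)\geq\tfrac{\ln x-1}{x}$. The only difference is that you also handle the degenerate regime $\overline{d}\leq 1$ explicitly (via Caro--Wei and the bound $\tfrac{\ln\Delta-1}{\Delta}\leq\tfrac{1}{e^2}<\tfrac12$), a case the paper passes over in silence; this is a completion of the same argument rather than a different approach.
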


We get the following lemma as a corollary, which we will use in the
proof of our main result.

\begin{lemma}\label{lemm:goodIS}
Let $G$ be an identifiable triangle-free graph on $n$ vertices and
maximum degree~$\Delta\geq 3$, and let $Y$ be the set of all vertices of $G$
having no false twin. Then $G[Y]$ has an independent set $S$ with the
following properties:
\begin{enumerate}
\item For each vertex $u$ of degree~1 in $G$, there exists a
  vertex of $G$ at distance~2 of $u$ which does not belong to $S$.\label{lemm:goodIS-prop-1}
\item $|S|\geq\tfrac{\ln\Delta-1}{\Delta}|Y|$\label{lemm:goodIS-prop-2}
\end{enumerate}
\end{lemma}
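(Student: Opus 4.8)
The plan is to obtain the size bound directly from Corollary~\ref{cor:shearer} and then to repair the degree-one condition by a local exchange argument. Since $G[Y]$ is an induced subgraph of a triangle-free graph it is itself triangle-free, and its maximum degree is at most $\Delta$. Because the function appearing in Theorem~\ref{thm:shearer} is decreasing in the average degree, applying Shearer's bound to $G[Y]$ and then bounding its average degree by $\Delta$ yields an independent set $S_0$ of $G[Y]$ with $|S_0|\ge\tfrac{\ln\Delta-1}{\Delta}|Y|$ (when $\Delta(G[Y])<3$ one checks the bound directly, as $\tfrac{\ln\Delta-1}{\Delta}<\tfrac12$ for every $\Delta$). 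This already secures property~\ref{lemm:goodIS-prop-2}, so the whole difficulty lies in enforcing property~\ref{lemm:goodIS-prop-1} without shrinking the set.

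First I would pin down exactly when property~\ref{lemm:goodIS-prop-1} can fail. Let $u$ be a vertex of degree~1 with neighbour $w$. Since $G$ is identifiable, $w$ cannot also have degree~1 (otherwise $u,w$ would be twins), so $N(w)\setminus\{u\}\neq\emptyset$ and the vertices at distance~2 from $u$ are precisely those of $N(w)\setminus\{u\}$. As $S\subseteq Y$, any such vertex lying outside $Y$ is automatically outside $S$; moreover, if $u$ has a false twin then $u$ has a pendant sibling attached to $w$, which lies outside $Y$ and already witnesses the condition. Hence the only pendants for which the condition may fail are the \emph{lonely} ones, namely those that are the unique degree-one neighbour of $w$ (these necessarily belong to $Y$), and failure occurs precisely when $N(w)\setminus\{u\}\subseteq S_0$.

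For each lonely pendant $u$ with $N(w)\setminus\{u\}\subseteq S_0$, observe that $w\notin S_0$, since $w$ is adjacent to the vertices of $N(w)\setminus\{u\}\subseteq S_0$. If $u\notin S_0$, I would add $u$ to the set and delete one vertex $z\in N(w)\setminus\{u\}$: independence is preserved because the only neighbour of $u$ is $w\notin S_0$, the cardinality is unchanged, and $z$ now witnesses property~\ref{lemm:goodIS-prop-1} for $u$. Deleting a vertex can never create a new violation, and inserting $u$ cannot either, because a lonely pendant is at distance at least~3 from every other pendant (distinct supports are distinct vertices), so these exchanges do not interfere with one another. Note also that each deleted $z$ has degree at least~2 and is therefore not a pendant, so no pendant is ever removed.

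The main obstacle is the remaining configuration, in which the pendant $u$ itself lies in $S_0$ together with all of $N(w)\setminus\{u\}$. Here a purely local swap fails: one must still remove some distance-two vertex $z\in N(w)\setminus\{u\}$, yet the support $w$ cannot be taken in compensation (it remains adjacent to $u\in S_0$), while $u$ is already counted. I expect this to be the delicate step. I would resolve it globally rather than vertex by vertex: run the independent-set construction on the graph obtained from $G[Y]$ by forbidding, for each offending support, one chosen neighbour, and then re-insert the lonely pendants afterwards. The crux is to verify, through an injection from the forbidden neighbours to the re-inserted pendants, that this compensation is exact, so that the bound $|S|\ge\tfrac{\ln\Delta-1}{\Delta}|Y|$ is retained; an augmenting-path re-insertion starting at the deleted vertex $z$ is an alternative. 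Making this accounting rigorous is the heart of the proof.
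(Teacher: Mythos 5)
Your reduction of the problem to the single bad configuration (a lonely pendant $u\in S_0$ whose support $w$ satisfies $N(w)\setminus\{u\}\subseteq S_0$) is accurate, and your Case-A exchange is sound. But the proof has a genuine gap exactly where you flag it: the remaining configuration is not resolved, only planned for. Two concrete problems with the plan. First, it is circular: the ``offending supports'' are defined in terms of the independent set $S_0$ produced by the construction, yet your fix is to re-run the construction on a graph modified according to which supports offend; the modified run produces a different set, with a possibly different collection of offending supports, and nothing guarantees this stabilises. Second, even setting circularity aside, the accounting is not obviously recoverable: in the bad configuration one must evict some $z\in N(w)\setminus\{u\}$ while $u$ is already counted and $w$ cannot be inserted, so each such pendant can cost one vertex outright; the lonely pendants lying in $S_0$ can form a constant fraction of $S_0$, so naive eviction can destroy the bound $|S|\ge\tfrac{\ln\Delta-1}{\Delta}|Y|$, and the injection from forbidden neighbours to re-inserted pendants that is supposed to prevent this is never constructed.

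The paper avoids the difficulty entirely by doing the reservation \emph{before} invoking Shearer rather than after. It collects the degree-one vertices of $Y$ into a set $S_1$ (pairwise at distance at least~3, since two pendants sharing a support would be false twins), reserves for each such vertex its support and one distance-two vertex, obtaining a set $T_1$ with $|S_1|\ge|T_1|/3$, places all of $S_1$ into $S$, and applies Corollary~\ref{cor:shearer} only to $G[Y\setminus (T_1\cap Y)]$, which contains no pendant and no support. Property~\ref{lemm:goodIS-prop-1} then holds by construction (the reserved distance-two vertex is excluded from $S$), and the size bound follows from the elementary inequality $\tfrac13>\tfrac{\ln x-1}{x}$: the reserved blocks contribute to $S$ at a \emph{better} rate than Shearer's bound, so carving them out can only help. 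If you want to salvage your post-processing route, this is the accounting you would need to reproduce --- delete the whole triple $\{u,w,z\}$ from the Shearer instance (not just $z$) and re-insert $u$ unconditionally --- which is the paper's argument in disguise.
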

\begin{proof}
Let $S_1\subseteq Y$ be the set of vertices of $Y$ having degree~1 in
$G$. Note that since $G$ is identifiable, it has no isolated edges and
therefore $S_1$ is an independent set in $G$ (and $G[Y]$). Moreover
since $Y$ has no vertices having a false twin, all vertices of $S_1$
are at distance at least~3 from each other. Let $T_1$ be the set of
vertices constructed as follows. All the vertices of $S_1$ belong to
$T_1$. For each element $s$ of $S_1$, its unique neighbour in $G$
belongs to $T_1$, and some arbitrary neighbour at distance~2 of $s$
belongs to $T_1$. Since all the vertices of $S_1$ are at distance at
least~3 from each other, for each vertex $s$ of $S_1$ there is a
vertex at distance~2 of $s$ belonging to $T_1\setminus S_1$. We now set
$Y_1=T_1\cap Y$. Note that we have
$|S_1|\geq\tfrac{|T_1|}{3}\geq\tfrac{|Y_1|}{3}$ since for each vertex
of $S_1$, at most three vertices of $G$ have been inserted into $T_1$.

Now, let $Y_2=Y\setminus Y_1$. By the previous construction, $Y_2$
neither contains a vertex of degree~1 in $G$, nor a neighbour of
such a vertex. By Corollary~\ref{cor:shearer}, $G[Y_2]$ has an
independent set $S_2$ of size at least $\tfrac{\ln
  \Delta-1}{\Delta}|Y_2|$.

Taking $S=S_1\cup S_2$, we get an independent set of $G[Y]$ fulfilling
the first property of the claim. Moreover, $Y_1$ and $Y_2$ form a
partition of $Y$, $S_1\subseteq Y_1$ and $S_2\subseteq Y_2$. Since
for all strictly positive $x$, $\tfrac{1}{3}>\tfrac{\ln x -1}{x}$, we have:
$$|S|\geq\tfrac{|Y_1|}{3}+\tfrac{\ln
  \Delta-1}{\Delta}|Y_2|\geq\tfrac{\ln \Delta-1}{\Delta}|Y|$$
\end{proof}

\subsection{Quasi-identifying the vertices in and around a strong induced matching}\label{subsec:LR}

This subsection is devoted to the construction of small enough
quasi-identifying codes.

Recall that in order to prove our main result, given a nontrivial
identifiable connected triangle-free graph $G$, we will construct an
independent set $S$ and consider the (possibly empty) strong induced
matching $M$ such that $R(M)$ forms the set of isolated edges of
$V(G)\setminus S$. In order to ensure that there are no isolated edges
$uv$ in $G[V(G)\setminus S]$, it would suffice to remove an arbitrary
neighbour of either $u$ or $v$ from $S$. However, this could lead to a
very large identifying code. Indeed, consider the example of a
complete graph $K_n$ where each edge is subdivided twice, $K_n^*$. The
original vertices of $K_n$ form a (maximal) independent set $S$ and
each original edge of $K_n$ corresponds to an isolated edge in the
subgraph of $K_n^*$ induced by the complement of $S$,
$K_n^*[V(K_n^*)\setminus S]$. Now, in $K_n^*$, getting rid of all
isolated edges of $K_n^*[V(K_n^*)\setminus S]$ by removing vertices
from $S$ requires a vertex cover of $K_n$, that is, $n-1$
vertices. This would yield an identifying code of size $|V(K_n^*)|-1$,
which is not interesting.

Hence, in order to overcome this problem, we show in this subsection
how to build an $(L(M),R(M))$-quasi-identifying code of bounded size.
We first deal with the special case where all vertices of $R(M)$ have
degree exactly~2 (note that by Definition~\ref{def:LRmatch} they must
have degree at least~2).

\begin{lemma}\label{lemma:LRdeg2}
  Let $G$ be an identifiable (not necessarily connected) triangle-free
  graph having a strong induced matching $M$ where $L=L(M)$, $R=R(M)$,
  and all vertices of $R$ have degree exactly~2. Then, there is an
  $(L,R)$-quasi-identifying code $C$ of $G$ having the following
  properties:
\begin{enumerate}
\item $|C|\leq| L|+{\tfrac{|
      R|}{2}}$.\label{lemma:LRdeg2-1}
\item No vertex of $R$ is $C$-isolated.\label{lemma:LRdeg2-2}
\item At least half of the vertices of $L$ belong to $C$.\label{lemma:LRdeg2-3}
\end{enumerate}
\end{lemma}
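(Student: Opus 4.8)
The plan is to reduce the construction to an orientation problem on an auxiliary multigraph. First I would exploit the degree hypothesis. Since $M$ is an induced matching and every vertex of $R$ has degree exactly~$2$, each $r\in R$ has precisely one neighbour in $R$ (its partner under $M$) and exactly one neighbour in $L$ (because $L=N(R)\setminus R$); moreover triangle-freeness forces the two $L$-neighbours of a single edge of $M$ to be distinct. Hence $G'=G[L\cup R]$ is exactly the graph obtained from a loopless multigraph $H$, with vertex set $L$ and one edge per edge of $M$ (joining the two $L$-neighbours), by subdividing every edge of $H$ twice; every vertex of $H$ has degree at least~$1$. A candidate code then has the shape $C=L'\cup R'$ with $L'\subseteq L$, where $R'$ selects, for each edge of $H$, at least one of its two subdivision (\emph{near}) vertices; the covering requirement (condition~\ref{def:quasi-3} of Definition~\ref{def:quasi}) is exactly that $R'$ meets every edge, and $|C|=|L'|+|R'|$.

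Next I would translate the remaining conditions of Definition~\ref{def:quasi} into combinatorial conditions on $(L',R')$ by computing the traces $C\cap B_{G'}(\cdot)$. A short case analysis shows that almost every pair is automatically separated, and that only two genuine obstructions survive: first, the two subdivision vertices of an edge $\ell\ell'$ of $H$ are confused exactly when both $\ell$ and $\ell'$ are removed from $L'$; second, a selected near-vertex of $\ell$ is confused with $\ell$ itself exactly when $\ell$ has a unique selected near-vertex and the partner of that vertex is not selected. Writing $a(\ell)$ for the number of edges of $H$ at $\ell$ whose near-$\ell$ vertex is selected, the second obstruction reads: if $a(\ell)=1$, then the corresponding edge must have \emph{both} its subdivision vertices selected. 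Domination forces $a(\ell)\ge 1$ whenever $\ell\notin L'$, and the no-$C$-isolated-$R$-vertex requirement (condition~\ref{lemma:LRdeg2-2}) says a selected near-vertex of $\ell$ is safe as soon as $\ell\in L'$ or its partner is selected. This bookkeeping is routine but is where care is needed.

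The heart of the argument is then a clean orientation statement: every connected multigraph admits an orientation in which at most one vertex has in-degree~$1$ (and none, if it has an even number of edges). I would derive this from the standard fact that for a connected (multi)graph and any prescribed in-degree parities summing to $|E|\bmod 2$, some orientation realizes them --- start from an arbitrary orientation and reverse a path between each pair of wrong-parity vertices, each reversal flipping parity only at the path's two endpoints. Prescribing all parities even, when $m_K=|E(H_K)|$ is even, or all even but one, when $m_K$ is odd, yields at most one in-degree-$1$ vertex in each component $H_K$.

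Finally I would build $C$ component by component. If $H_K$ carries an even orientation, select one near-vertex per edge according to it and keep all of $L_K$ in $L'$: then $a(\ell)$ is even for every $\ell$, both obstructions vanish, no vertex of $R$ is isolated, and the cost is exactly $|L_K|+m_K$. If $m_K$ is odd, take the orientation with a single in-degree-$1$ vertex $u$, resolve the second obstruction at $u$ by additionally selecting the far subdivision vertex of $u$'s unique in-edge $e_0$, and remove $u$ from $L'$ to pay for it. Dropping $u$ is safe precisely because $u$'s only selected near-vertex comes from $e_0$, whose partner is now selected (so no vertex of $R$ becomes isolated), and $e_0$ retains its other endpoint in $L'$. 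The cost is again $|L_K|+m_K$, and $|L'_K|=|L_K|-1\ge |L_K|/2$ since every component of $H$ has at least two vertices. Summing over components gives $|C|\le |L|+\tfrac{|R|}{2}$, at least half of $L$ in $C$, and no $C$-isolated vertex of $R$. The main obstacle is exactly this last verification: naively doubling an edge and deleting an arbitrary $L$-vertex to respect the budget risks both isolating a vertex of $R$ and re-creating the second obstruction elsewhere, and it is the orientation lemma that guarantees a single, safely removable culprit in each component.
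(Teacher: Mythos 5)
Your proposal is correct and follows essentially the same route as the paper: contract each $3$-path into an edge of a multigraph on $L$, orient it, select the head-side subdivision vertex of each arc, and repair the at most one in-degree-$1$ vertex per component by the same local swap (drop that $L$-vertex, add the partner subdivision vertex). The only cosmetic difference is how the orientation is obtained --- you invoke the standard parity-prescription argument via path reversals, whereas the paper pushes in-degree-$1$ defects to the root of a spanning tree by bottom-up arc flips; both give the same guarantee of at most one defective vertex per component.
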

\begin{proof}
  In order to simplify its construction, let us first define the
  multigraph $G_{L,R}=(L,E)$ with vertex set $L$ and in which there is
  an edge between two vertices $l_1,l_2$ of $L$ if and only if there
  exist two vertices $r_1,r_2$ of $R$, such that $l_1,r_1,r_2,l_2$ is
  a 3-path in $G$. In other words, we contract every path of length~3
  of $G[L\cup R]$ having both endpoints in $L$, into one edge. There
  can be multiple edges in $G_{L,R}$ (but no loops), since several
  disjoint 3-paths may join $l_1$ to $l_2$.

  From $G_{L,R}$ we will build an oriented multigraph
  $\overrightarrow{G}_{L,R}$. Given an orientation of
  $\overrightarrow{G}_{L,R}$, we define the subset
  $S(\overrightarrow{G}_{L,R})$ of vertices of $L\cup R$ in the
  following way: all the vertices of $L$ belong to
  $S(\overrightarrow{G}_{L,R})$, and for each arc
  $\overrightarrow{l_1l_2}$ of $\overrightarrow{G}_{L,R}$
  corresponding to the path $l_1r_1r_2l_2$ in $G$, $r_2$ belongs to
  $S(\overrightarrow{G}_{L,R})$. Note that
  $|S(\overrightarrow{G}_{L,R})|=|L|+\tfrac{|R|}{2}$. An illustration
  is given in Figure~\ref{fig:tfree_digraph}, where the gray vertices
  belong to $S(\overrightarrow{G}_{L,R})$. Our aim is to construct an
  orientation of $\overrightarrow{G}_{L,R}$ for which
  $S(\overrightarrow{G}_{L,R})$ is the desired $(L,R)$-quasi-identifying
  code of $G$.

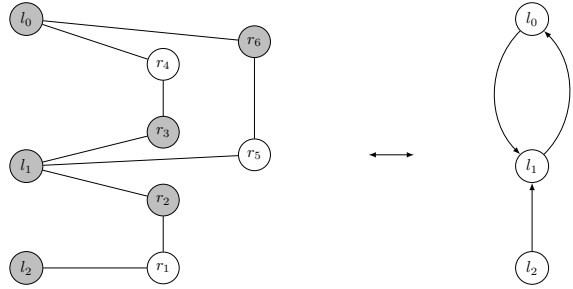
\begin{figure}[!ht]
\centering
\subfigure{
\scalebox{0.6}{\begin{tikzpicture}
\path (0,5.5) node[draw,shape=circle,fill=lightgray] (l0) {$l_0$};
\path (0,2.25) node[draw,shape=circle,fill=lightgray] (l1) {$l_1$};
\path (0,0) node[draw,shape=circle,fill=lightgray] (l2) {$l_2$};
\path (3,1.5) node[draw,shape=circle,fill=lightgray] (r1) {$r_2$};
\path (3,0) node[draw,shape=circle] (r2) {$r_1$};
\path (3,3) node[draw,shape=circle,fill=lightgray] (r3) {$r_3$};
\path (3,4.5) node[draw,shape=circle] (r4) {$r_4$};
\path (5,2.5) node[draw,shape=circle,] (r5) {$r_5$};
\path (5,5) node[draw,shape=circle,fill=lightgray] (r6) {$r_6$};

\draw (l1) -- (r1)-- (r2) -- (l2)
      (l1) -- (r3) -- (r4) -- (l0)
      (l1) -- (r5) -- (r6) -- (l0);
\draw[<->, >=latex, semithick] (7.5,2.5) -- +(1,0);
\end{tikzpicture}}
}\qquad
\subfigure{
\scalebox{0.6}{\begin{tikzpicture}
\path (0,5.5) node[draw,shape=circle] (l0) {$l_0$};
\path (0,2.25) node[draw,shape=circle] (l1) {$l_1$};
\path (0,0) node[draw,shape=circle] (l2) {$l_2$};
\draw[->,>=latex, semithick] (l0) .. controls +(-1,-1) and +(-1,1) .. (l1);
\draw[->,>=latex, semithick] (l1) .. controls +(1,1) and +(1,-1) .. (l0);
\draw[->,>=latex, semithick] (l2) -- (l1);
\end{tikzpicture}}}
\caption{Correspondance between a special subset of $L\cup R$ and $\protect\overrightarrow{G}_{L,R}$}
\label{fig:tfree_digraph}
\end{figure}

We start by orienting the arcs of $\overrightarrow{G}_{L,R}$ in an
arbitrary way. Note that $S(\overrightarrow{G}_{L,R})$ fulfills all
three required properties of the statement of the lemma.  Hence, if
$S(\overrightarrow{G}_{L,R})$ is an $(L, R)$-quasi-identifying code of
$G$, we are done. So suppose this is not the case. Note that
$S(\overrightarrow{G}_{L,R})$ fulfills conditions
number~\ref{def:quasi-1} and~\ref{def:quasi-3} of
Definition~\ref{def:quasi}. Hence, there are pairs of vertices of
$L\cup R$ which are not separated by
$S(\overrightarrow{G}_{L,R})$. The only case where a pair $l,r$ is not
separated by $S(\overrightarrow{G}_{L,R})$, is when $l\in L$, $r\in
R$, and both belong to $S(\overrightarrow{G}_{L,R})$, but they are
only dominated by each other and themselves. This is equivalent to the
case where $l$ is of in-degree~1 in $\overrightarrow{G}_{L,R}$ (see
Figure~\ref{fig:upperbnd_bnd_tfree-3} for an illustration). In this
case, in order to fix this problem, we modify the orientation of
$\overrightarrow{G}_{L,R}$ as follows.

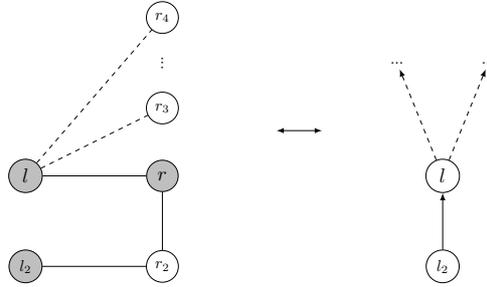
\begin{figure}[!ht]
\centering
\subfigure{\scalebox{0.6}{\begin{tikzpicture}
\path (0,1.5) node[draw,shape=circle,scale=1.25,fill=lightgray] (l1) {$l$};
\path (0,-0.5) node[draw,shape=circle,fill=lightgray] (l2) {$l_2$};
\path (3,1.5) node[draw,shape=circle,scale=1.25,fill=lightgray] (r1) {$r$};
\path (3,-0.5) node[draw,shape=circle] (r2) {$r_2$};
\path (3,3) node[draw,shape=circle] (r3) {$r_3$};
\path (3,5) node[draw,shape=circle] (r4) {$r_4$};
\draw (l1) -- (r1)
      (l2) -- (r2)
      (r1) -- (r2);

\draw[dashed] (l1) -- (r3)
        (l1) -- (r4);
\draw (3,4) node[rotate=90] {...};
\draw[<->, >=latex, semithick] (5.5,2.5) -- +(1,0);
\end{tikzpicture}}}\qquad
\subfigure{
\scalebox{0.6}{\begin{tikzpicture}
\path (1,4) node (l0) {...};
\path (-1,4) node (l00) {...};
\path (0,1.5) node[draw,scale=1.25,shape=circle] (l1) {$l$};
\path (0,-0.5) node[draw,shape=circle] (l2) {$l_2$};
\draw[->,>=latex, dashed, semithick] (l1) -- (l0);
\draw[->,>=latex, dashed, semithick] (l1) -- (l00);
\draw[->,>=latex, semithick] (l2) -- (l1);

\end{tikzpicture}}}
\caption{Vertices $l$ and $r$ are not separated}
\label{fig:upperbnd_bnd_tfree-3}
\end{figure}

At first, consider a connected component $\overrightarrow{G}_1$ of
$\overrightarrow{G}_{L,R}$, and construct an arbitrary spanning tree
$\overrightarrow{T}_1$ of $\overrightarrow{G}_1$, rooted in some
vertex $l$. Now, go through all vertices of $\overrightarrow{T}_1$,
level by level in a bottom-up order from the leaves up to the
root. Whenever the in-degree of the current vertex, $v$, is equal
to~1, swap the orientation of the arc joining $v$ to its parent in
$\overrightarrow{T}_1$. Doing so, the in-degree of $v$ in
$\overrightarrow{G}_1$ becomes distinct from~1, and the in-degree of
its parent is either incremented or decremented by~1. Note that except
for the root $l$, all vertices of $\overrightarrow{G}_1$ have now an
in-degree different from~1. This process is repeated for all connected
components of $\overrightarrow{G}_{L,R}$.

Let $C=S(\overrightarrow{G}_{L,R})$ be the new set corresponding to
the new orientation. If $C$ is an $(L,R)$-quasi-identifying code of
$G$, we are done. Otherwise, as observed earlier, it means that some
roots of the spanning trees we built, have in-degree~1 in
$\overrightarrow{G}_{L,R}$. Let $l$ be such a root with
in-degree~1. Observe that $l$ has a unique neighbour in $C\cap R$, say $r$. Let
$r_2$ be the neighbour of $r$ in $R$. It is
sufficient to take out $l$ from $C$ and to replace it by $r_2$ in
order to separate $l$ from $r$ in $G[L\cup R]$ (see
Figure~\ref{fig:upperbnd_bnd_tfree-5} for an illustration), without
changing the cardinality of $C$. Moreover, all neighbours of $l$ are
still separated from the other vertices because they are all in
$R\setminus C$ and therefore have a neighbour in $R\cap C$, which
itself has at least one neighbour in $L\cap C$. Hence $C$ is now an
$(L,R)$-quasi-identifying code of $G$. Since the process did not
change the cardinality of $C$, we get property
number~\ref{lemma:LRdeg2-1} of the claim of the lemma.

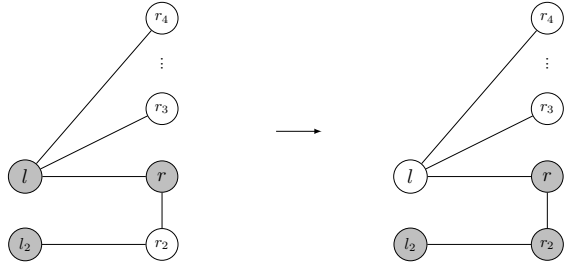
\begin{figure}[!ht]
\centering
\subfigure{
\scalebox{0.6}{\begin{tikzpicture}
\path (0,1.5) node[draw,shape=circle,scale=1.25,fill=lightgray] (l1) {$l$};
\path (0,0) node[draw,shape=circle,fill=lightgray] (l2) {$l_2$};
\path (3,1.5) node[draw,scale=1.25,shape=circle,fill=lightgray] (r1) {$r$};
\path (3,0) node[draw,shape=circle] (r2) {$r_2$};
\path (3,3) node[draw,shape=circle] (r3) {$r_3$};
\path (3,5) node[draw,shape=circle] (r4) {$r_4$};
\draw (l1) -- (r1)
      (l2) -- (r2)
      (r1) -- (r2)
      (l1) -- (r3)
      (l1) -- (r4);
\draw (3,4) node[rotate=90] {...};
\draw[->, >=latex, semithick] (5.5,2.5) -- +(1,0);
\end{tikzpicture}}
}\qquad
\subfigure{
\scalebox{0.6}{\begin{tikzpicture}
\path (0,1.5) node[draw,scale=1.25,shape=circle] (l1) {$l$};
\path (0,0) node[draw,shape=circle,fill=lightgray] (l2) {$l_2$};
\path (3,1.5) node[draw,scale=1.25,shape=circle,fill=lightgray] (r1) {$r$};
\path (3,0) node[draw,shape=circle,fill=lightgray] (r2) {$r_2$};
\path (3,3) node[draw,shape=circle] (r3) {$r_3$};
\path (3,5) node[draw,shape=circle] (r4) {$r_4$};
\draw (l1) -- (r1)
      (l2) -- (r2)
      (r1) -- (r2)
      (l1) -- (r3)
      (l1) -- (r4);
\draw (3,4) node[rotate=90] {...};
\end{tikzpicture}}}
\caption{Local modification of the constructed code}
\label{fig:upperbnd_bnd_tfree-5}
\end{figure}

Notice that there are at most $\tfrac{|L|}{2}$ connected
components in $G[L\cup R]$ since each of them contains at least two
vertices of $L$. Thus property number~\ref{lemma:LRdeg2-3} of the
claim of the lemma follows.

Property number~\ref{lemma:LRdeg2-2} is fulfilled by the construction
of $C$ since in each pair of adjacent vertices of $R$, either it has a
code vertex in $L$ as a neighbour if there was no modification done,
or in $R$ if a switch of two elements of $L$ and $R$ was
necessary. Moreover, for each such pair, at least one of its elements
belongs to the code. This shows that $C$ is an
$(L,R)$-quasi-identifying code and completes the proof.
\end{proof}

We now deal with the general case, where the vertices of
$R(M)$ have degree \emph{at least}~2 as required in
Definition~\ref{def:LRmatch}.

\begin{lemma}\label{lemma:LR}
  Let $G$ be an identifiable (not necessarily connected) triangle-free
  graph having a strong induced matching $M$, with $L=L(M)$ and
  $R=R(M)$. There exists a set $L'$ of vertices of $L\cup R$ such that
  $|L'| \ge \frac{|L|}{3}$, and $C=(L\cup R)\setminus L'$ is an
  $(L,R)$-quasi-identifying code of $G$ having no $C$-isolated
  vertices.
\end{lemma}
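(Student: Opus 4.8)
The plan is to produce $C$ by deleting a large set $L'$ from $L\cup R$, and I would begin by isolating two structural observations that make almost all of the quasi-identifying requirements free, leaving a single separation condition. First, if $C\supseteq R$, then every vertex of $L\cup R$ has a neighbour in $R\subseteq C$, so domination, the absence of $C$-isolated vertices, and condition~\ref{def:quasi-3} of Definition~\ref{def:quasi} all hold automatically; a short case analysis (using triangle-freeness, so that no vertex of $L$ can be adjacent to both endpoints of an edge of $M$) shows that the only pairs which may then violate condition~\ref{def:quasi-2} are the two endpoints $r,r'$ of an edge of $M$, and these are separated exactly when $N(\{r,r'\})\cap L\cap C\neq\emptyset$, i.e. when $\{r,r'\}$ retains at least one neighbour in $L$. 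Dually --- and this is the point that controls domination --- a vertex of the code dominates itself, so if $C\supseteq L$ then domination and non-isolation are again essentially automatic, and one checks that the only pairs that can fail to be separated are a degree-2 vertex $r\in R$ together with its unique $L$-neighbour: precisely the in-degree-one configuration that the orientation-and-swap argument of Lemma~\ref{lemma:LRdeg2} is designed to repair.

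With these in hand I would split on the number $k=\tfrac{|R|}{2}$ of edges of $M$. If $k\le \tfrac{2|L|}{3}$, I keep all of $R$ and delete from $L$: by the first observation it suffices to retain, for each edge of $M$, one neighbour in $L$, i.e. to keep a transversal of the family $\{\,N(\{r,r'\})\cap L : rr'\in M\,\}$; such a transversal of size at most $k$ always exists, so I may delete $L'\subseteq L$ with $|L'|\ge |L|-k\ge \tfrac{|L|}{3}$, and $(L\cup R)\setminus L'$ is quasi-identifying by the first observation. If instead $k>\tfrac{2|L|}{3}$, there are many edges and comparatively few vertices in $L$, and I keep all of $L$ and delete from $R$: here I would feed a degree-2 reduction of the instance to Lemma~\ref{lemma:LRdeg2}, so that its deletion of $\tfrac{|R|}{2}=k>\tfrac{2|L|}{3}\ge\tfrac{|L|}{3}$ vertices of $R$ delivers the bound, while the second observation guarantees that the only separations needing attention are the ones Lemma~\ref{lemma:LRdeg2} already handles.

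The real obstacle is this second case. Deleting vertices of $R$ is dangerous because an $L$-vertex all of whose $R$-neighbours are deleted becomes undominated and $C$-isolated; thus the kept $R$-vertices must form a dominating set of $L$ (in particular, an endpoint of $M$ that is the \emph{only} $R$-neighbour of some $L$-vertex must be kept), and, because $G$ has more $L$--$R$ edges than the degree-2 reduction fed to Lemma~\ref{lemma:LRdeg2}, the separation certified in the reduced instance must be re-checked in the denser graph $G$ itself. The crux is therefore to choose the reduction (essentially a map $\phi\colon R\to L$ selecting one neighbour per $R$-vertex) and the set of kept $R$-vertices so as to respect this domination constraint while still deleting at least $\tfrac{|L|}{3}$ vertices; this is where the hypothesis $k>\tfrac{2|L|}{3}$ is used, since it forces enough slack between the number of edges of $M$ and the number of $L$-vertices that must be dominated to leave sufficiently many $R$-vertices free to delete. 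Making this quantitative, and threading the resulting choice through the spanning-tree reorientation of Lemma~\ref{lemma:LRdeg2} that repairs the residual degree-2 separations, is the technical heart of the proof, and is where I would expect to spend most of the effort.
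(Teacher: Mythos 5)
Your first observation (keep all of $R$ together with a transversal $T\subseteq L$ of the family $\{N(\{r,r'\})\cap L : rr'\in M\}$) is correct, and when a transversal of size at most $\tfrac{2|L|}{3}$ exists it gives a genuinely simpler argument than the paper's for that regime. The problem is that your dichotomy puts all the difficulty into the second case and your plan for that case does not work. Concretely, take $G=C_6$ with vertices $l_1,r_1,r_1',l_2,r_2',r_2$ in cyclic order and $M=\{r_1r_1',\,r_2r_2'\}$; this is a strong induced matching with $|L|=2$ and $k=2>\tfrac{2|L|}{3}$, so it falls into your second case. A code containing all of $L$ and exactly one endpoint of each edge of $M$ has four possible forms, and each fails: choosing the two $R$-vertices adjacent to the same $l_i$ leaves the other $L$-vertex $C$-isolated, while choosing one adjacent to each $l_i$ leaves some pair $l_i,r$ with $C\cap B_{G'}(l_i)=C\cap B_{G'}(r)=\{l_i,r\}$, violating condition~\ref{def:quasi-2} of Definition~\ref{def:quasi}. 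So ``keep all of $L$ and delete one endpoint per edge of $M$'' is infeasible here, and no amount of slack from $k>\tfrac{2|L|}{3}$ rescues it (this instance has no slack at all: every $R$-vertex has degree~$2$, so your reduction map $\phi$ is forced). Note also that your appeal to Lemma~\ref{lemma:LRdeg2} is internally inconsistent with keeping all of $L$: that lemma's construction deliberately swaps some vertices of $L$ out of the code (its third property only promises that \emph{half} of $L$ remains), and that swap is exactly what saves the $C_6$ instance. Finally, even if you permit those swaps, lifting a code certified on a degree-$2$ reduction back to the denser graph $G$ forces you to add further code vertices (a vertex $r\in R$ of degree $\ge 3$ kept in the code needs a second code neighbour in $L$ to be separated from its first one), and your budget of exactly $\tfrac{|R|}{2}$ deletions does not account for this cost.

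For comparison, the paper does not split on the number of edges of $M$ but on vertex degrees: it sets $R_1$ to be the $M$-edges both of whose endpoints have degree~$2$, applies Lemma~\ref{lemma:LRdeg2} only there, and then builds \emph{two} competing codes on the rest --- $C_a$, which keeps $L_2$ and one endpoint per $R_2$-edge plus up to $\min\{\tfrac{|L_1|}{2},\tfrac{|R_2|}{2}\}$ repair vertices, and $C_b$, which keeps all of $R_2$ and one $L$-neighbour per $R_2$-edge. Taking the better of the two and minimizing $\max\{b,2-2b\}$ over $b=\max\{|L_1|,|R_2|\}/|L|$ is what produces the constant $\tfrac13$. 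Your case~1 could plausibly replace one of these two codes, but you would still need something playing the role of the other, together with the balancing argument; as it stands the proposal has a genuine gap.
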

\begin{proof}
  Let us first divide sets $L$ and $R$ into the following subsets:
  let $R_1\subseteq R$ be such that $r\in R_1$ if both $r$ and its
  unique neighbour in $R$ are of degree~2. Let $L_1\subseteq L$ be the
  set of all neighbours of vertices of $R_1$, let $R_2=R\setminus
  R_1$, and let $L_2=L\setminus L_1$ (see
  Figure~\ref{fig:upperbnd_bnd_tfree-2} for an illustration).

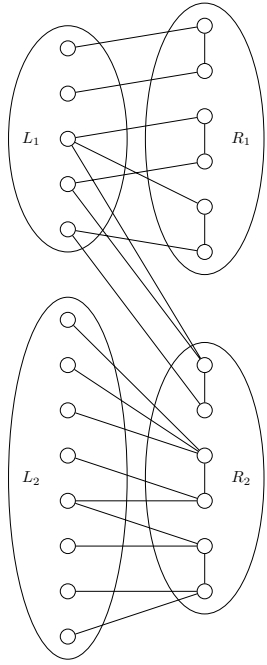
\begin{figure}
\centering
\scalebox{0.6}{\begin{tikzpicture}
\path (0,0) node[draw,shape=circle] (l21) {};
\path (0,1) node[draw,shape=circle] (l22) {};
\path (0,2) node[draw,shape=circle] (l23) {};
\path (0,3) node[draw,shape=circle] (l24) {};
\path (0,4) node[draw,shape=circle] (l25) {};
\path (0,5) node[draw,shape=circle] (l26) {};
\path (0,6) node[draw,shape=circle] (l27) {};
\path (0,7) node[draw,shape=circle] (l28) {};
\path (0,9) node[draw,shape=circle] (l11) {};
\path (0,10) node[draw,shape=circle] (l12) {};
\path (0,11) node[draw,shape=circle] (l13) {};
\path (0,12) node[draw,shape=circle] (l14) {};
\path (0,13) node[draw,shape=circle] (l15) {};
\path (3,1) node[draw,shape=circle] (r21) {};
\path (3,2) node[draw,shape=circle] (r22) {};
\path (3,3) node[draw,shape=circle] (r23) {};
\path (3,4) node[draw,shape=circle] (r24) {};
\path (3,5) node[draw,shape=circle] (r25) {};
\path (3,6) node[draw,shape=circle] (r26) {};
\path (3,8.5) node[draw,shape=circle] (r11) {};
\path (3,9.5) node[draw,shape=circle] (r12) {};
\path (3,10.5) node[draw,shape=circle] (r13) {};
\path (3,11.5) node[draw,shape=circle] (r14) {};
\path (3,12.5) node[draw,shape=circle] (r15) {};
\path (3,13.5) node[draw,shape=circle] (r16) {};
\draw (l21) -- (r21)
      (l22) -- (r21)
      (l23) -- (r22)
      (l24) -- (r22)
      (l24) -- (r23)
      (l25) -- (r23)
      (l26) -- (r24)
      (l27) -- (r24)
      (l28) -- (r24)
      (l11) -- (r25)
      (l12) -- (r26)
      (l13) -- (r26)
      (l11) -- (r11)
      (l12) -- (r13)
      (l13) -- (r12)
      (l13) -- (r14)
      (l14) -- (r15)
      (l15) -- (r16)
      (r21) -- (r22)
      (r23) -- (r24)
      (r25) -- (r26)
      (r11) -- (r12)
      (r13) -- (r14)
      (r15) -- (r16);
\draw (0,11) ellipse (1.3cm and 2.5cm)
      (0,3.5) ellipse (1.3cm and 4cm)
      (3,11) ellipse (1.3cm and 3cm)
      (3,3.5) ellipse (1.3cm and 3cm)
      (-0.8,11) node {$L_1$}
      (-0.8,3.5) node {$L_2$}
      (3.8,11) node {$R_1$}
      (3.8,3.5) node {$R_2$};
\end{tikzpicture}}
\caption{Illustration of sets $L_1$, $L_2$, $R_1$, and $R_2$}
\label{fig:upperbnd_bnd_tfree-2}
\end{figure}

We can use Lemma~\ref{lemma:LRdeg2} to construct an
$(L_1,R_1)$-quasi-identifying code $C_1$ of $G$ such that the three
properties described in the statement of Lemma~\ref{lemma:LRdeg2} are fulfilled. Let
$C_1$ be such a code, in particular we have $|C_1|\le |L_1|
+\tfrac{|R_1|}{2}$. Let us now describe the construction of two
distinct $(L,R)$-quasi-identifying codes $C_a$ and $C_b$.

\begin{itemize}
\item\textbf{Construction of code $C_a$.}\\We construct $C_a$
  such that
  $|C_a|\leq|L_1|+\tfrac{|R_1|}{2}+|L_2|+\tfrac{|R_2|}{2}+\min\left\{\tfrac{|L_1|}{2},\tfrac{|R_2|}{2}\right\}$,
  as follows.
 \begin{enumerate}
 \item Put $C_1$ into $C_a$.
 \item Put $L_2$ into $C_a$.
 \item For each pair $r,r'$ of adjacent vertices of $R_2$, let $r^*$
   be one of them having at least two neighbours in $L$ (by definition
   of $R_2$ either $r$ or $r'$ has this property). Put $r^*$ into
   $C_a$.\label{constr-ca-3}
 \item For each pair $r,r'$ of adjacent vertices of $R_2$, let $r^*$ be
   the one which was put into $C_a$ in the previous step. Check if $r^*$
   has less than two neighbours within $C_a\cap L$ (this may happen if
   some of its neighbours are in $L_1$, and they do not belong to
   $C_1$). If this is the case, pick an additional neighbour of $r^*$ --- which
   exists since $r$ has at least two neighbours in $L$ --- and put it
   into $C_a$. Note that this is done at most $\tfrac{|R_2|}{2}$
   times. Moreover, at most $\tfrac{|L_1|}{2}$ new vertices from $L_1$
   are put into $C_a$ in such a way since by property
   number~\ref{lemma:LRdeg2-3} of Lemma~\ref{lemma:LRdeg2}, there are
   at most $\tfrac{|L_1|}{2}$ vertices of $L_1$ not in
   $C_1$.\label{constr-ca-4}
 \item Finally, consider each $C_a$-isolated vertex $l$ of $L$, take it out
   of $C_a$ and put an arbitrary neighbour of $l$ into $C_a$ (this
   operation does not affect the size of $C_a$).
\end{enumerate}
\item\textbf{Construction of code $C_b$.}\\We construct $C_b$
  such that $|C_b|\leq |L_1|+\tfrac{|R_1|}{2}+3\tfrac{|R_2|}{2}$, as
  follows.
 \begin{enumerate}
 \item Put $C_1$ into $C_b$.
 \item Put $R_2$ into $C_b$\label{constr-cb-2}.
 \item For each pair $r,r'$ of adjacent vertices of $R_2$, one
   arbitrary neighbour in $L$ of either $r$ or $r'$ is put into $C_b$.\label{constr-cb-3}
 \item Finally, in the same way as for the construction of $C_a$, we
   get rid of each $C_b$-isolated vertex $l$ of $L$ by taking $l$ out
   of $C_b$ and putting an arbitrary neighbour of $l$ into $C_b$
   instead.
\end{enumerate}
\end{itemize}

Let us now prove that $C_a$ and $C_b$ are $(L,R)$-quasi-identifying
codes without $C_a$-isolated or $C_b$-isolated vertices. First note that in
both constructions, the final step consists in replacing some
$C_a$-isolated vertices from $C_a$ (resp.  $C_b$). In order to
simplify the proof, let $C_a^*$ (resp.  $C_b^*$) be the code as it is
before this last step. We first prove that $C_a^*$ (resp. $C_b^*$)
have all desired properties except that there remain $C_a^*$-isolated
(resp. $C_b^*$-isolated) vertices in $L$. We then prove that
performing the last step transforms it into an
$(L,R)$-quasi-identifying code with all required properties.

It can first be noticed that both $C_a^*$ and $C_b^*$ are dominating
sets, so point number~\ref{def:quasi-1} of Definition~\ref{def:quasi}
holds.

Let us now show point number~\ref{def:quasi-2} of
Definition~\ref{def:quasi} (the separation condition). In both codes,
the vertices of all pairs $u,v$ of vertices of $L_1\cup R_1$ are
separated from each other, since $C_1$ is a subset of both $C_a^*$ and
$C_b^*$.

Now, suppose that $u\in R_1$ and $v\in L_2\cup R_2$. By definition
of $R_1$, no vertex of $R_1$ is adjacent to any vertex of $L_2\cup
R_2$. Therefore, by condition number~\ref{def:quasi-3} of
Definition~\ref{def:quasi}, either $u$ or its neighbour in $R_1$
belong to $C_1$, hence $u$ and $v$ are separated.

Thus, it remains to check if $u$ and $v$ are separated when $u\in L_1$
and $v\in L_2\cup R_2$, and when both $u$ and $v$ belong to $L_2\cup
R_2$. We deal with $C_a^*$ and $C_b^*$ separately.

\paragraph{Code $C_a^*$}
\begin{itemize}
\item Suppose $u\in L_1$ and $v\in L_2\cup R_2$. Note that $u$ is
  dominated by some vertex $x$ within $L_1\cup R_1$ since
  $C_1\subseteq C_a^*$. If $v\in L_2$, $u$ and $v$ are separated by $x$
  since no vertex of $L_2$ is adjacent to any vertex of $L_1\cup
  R_1$. If $v\in R_2$ and $v\notin C_a^*$, then $u$ and $v$ are
  separated by the neighbour of $v$ in $R_2$, which belongs to
  $C_a^*$. Similarly, if $u$ has a neighbour in $R_1$ belonging to
  $C_1$, we are done. Otherwise, it means that $v\in C_a^*$ and $u\in
  C_1$ (otherwise $u$ would not be dominated by $C_1$). Hence $v$ has
  another neighbour in $L$, say $u'$, belonging to $C_a^*$, and $u'$
  separates $u$ from $v$. Indeed, at step~\ref{constr-ca-4} of the
  construction of $C_a$, either $v$ already had at least two
  neighbours in $L\cap C_a^*$, or an additional one has been added.

\item Now, suppose both $u$ and $v$ belong to $L_2\cup R_2$.

If both $u$ and $v \in L_2$, they are separated since the whole set
$L_2$, which is independent, belongs to $C_a^*$.

If both $u$ and $v$ belong to $R_2$ and they are not adjacent, they
are separated since either themselves or their respective neighbours
in $R_2$ belong to $C_a^*$ by step~\ref{constr-ca-3} of its
construction. Otherwise, for the same reason one of them (say $u$)
belongs to the code. It is ensured in step~\ref{constr-ca-4} that at
least one neighbour of $u$ in $L$ belongs to $C_a^*$, therefore $u$
and $v$ are separated by this neighbour.

If $u\in L_2$ and $v\in R_2$ and they are not adjacent, they are
separated by $u$ since the whole set $L_2$ belongs to
$C_a^*$. Otherwise, if $v\notin C_a^*$, they are separated by the
neighbour of $v$ in $R_2$. Otherwise, again by step~\ref{constr-ca-4}
of the construction $v$ has a second neighbour in $L\cap C_a^*$,
separating them.
\end{itemize}

\paragraph{Code $C_b^*$}
\begin{itemize}
\item If $u\in L_1$ and $v\in L_2\cup R_2$, $u$ and $v$ are separated by a
neighbour of $v$ belonging to $R_2$ since the whole set $R_2$ is in
$C_b^*$.

\item Now, suppose $u,v\in L_2\cup R_2$.

  If both $u,v$ belong to $L_2$, and they have the same set of
  neighbours within $R$, we are done since they do not need to be
  separated (point number~\ref{def:quasi-2} of
  Definition~\ref{def:quasi}). Otherwise, they are separated since all
  their neighbours within $L\cup R$ belong to $R_2$, and
  $R_2\subseteq C_b^*$.

  If both $u,v$ belong to $R_2$, $u$ and $v$ are separated by
  themselves if they are not adjacent. Otherwise, they are separated
  by a neighbour of one of them in $L\cap C_b^*$, added at
  step~\ref{constr-cb-3} of the construction.

Finally, if $u\in R_2$ and $v\in L_2$, then $u$ and $v$ are either
separated by $u$ if $u$ and $v$ are not adjacent, or by the neighbour
of $u$ in $R_2$ otherwise.
\end{itemize}

Let us now check point number~\ref{def:quasi-3} of
Definition~\ref{def:quasi}, i.e. that for each pair of adjacent
vertices in $R$, at least one of them belongs to the code. This is
true for vertices of $R_1$ since $C_1$ is an
$(L_1,R_1)$-quasi-identifying code and therefore fulfills this
condition. This is also ensured for vertices of $R_2$ at
step~\ref{constr-ca-3} of the construction of $C_a$ and
at step~\ref{constr-cb-2} of the construction of $C_b$.

Hence, we have shown that both $C_a^*$ and $C_b^*$ are
$(L,R)$-quasi-identifying codes.

Moreover, there are no $C_a^*$-isolated (resp. $C_b^*$-isolated)
vertices in $R$: there are no such vertices in $R_1$ by
Lemma~\ref{lemma:LRdeg2}, and no such vertices in $R_2$ for $C_a^*$ by
step~\ref{constr-ca-4} of its construction, and for $C_b^*$ as well
since $R_2\subseteq C_b^*$.

\vspace{0.2cm}
As announced previously, we now have to deal with the last step of the
constructions of both $C_a$ and $C_b$. It is easily observed that this
step does not affect the domination property of both codes. Indeed,
the former $C_a$-,$C_b$-isolated vertices themselves are now dominated
by some neighbour. Moreover each of their neighbours belongs to $R$,
and since $C_a$ and $C_b$ are $(L,R)$-quasi-identifying its own
neighbour in $R$ belongs to the code.

Let us prove that the separation condition is still satisfied by $C_a$
and $C_b$. Let $C_x$ ($x\in\{a,b\}$) be the considered code and let
$l\in L$ be a $C_x$-isolated vertex which gets replaced in $C_x$ by
one of its neighbours in $R$, say $r_l$. The only vertices which might
be affected by the modification, are vertices which were previously
dominated by $l$, i.e. vertices of $B(l)$: assume, by contradiction,
that $u\in B(l)$ is no longer separated from some vertex $v$.

If $u=l$, in $C_x$, we have $B(l)\cap C_x=\{r_l\}$. Since $B(v)\cap
C_x=\{r_l\}$ and the neighbour of $r_l$ in $R$ belongs to $C_x$, $v\in
L$. Moreover, observe that $v$ was dominated by a vertex of
$C_x^*$, say $v'$, and $v'\notin B(l)$ since $l$ is
$C_x^*$-isolated. Hence, it means that $v$ was also
$C_x^*$-isolated. But then, in the last step of the construction of
$C_x$, one of $l$ and $v$, say $l$, has been considered first and
replaced by $r_l$, leaving them separated by $v'$, a contradiction.

Now, if $u$ is a neighbour of $l$, $u\in R$ and the neighbour of $u$
in $R$, call him $u'$, belongs to $C_x$ by construction. Since $C_x^*$
is an $(L,R)$-quasi-identifying code, $u'$ has a neighbour belonging
to $L$ and to the code. Hence $u$ and $u'$ are separated, $u\neq r_l$
and $v$ must be a neighbour of $u'$ not belonging to the code. Hence
$u\in R_2$ since $u'$ has degree at least~3. Moreover, $v\in L_2$;
otherwise, since $C_1\subseteq C_x$, $v$ would be dominated within
$C_1$ and $u,v$ would be separated --- a contradiction. Now, if
$C_x=C_a$, $v\in C_a$, a contradiction. If $C_x=C_b$, $u\in C_b$, a
contradiction too. This completes the proof of the separation
property.

Now, note that point number~\ref{def:quasi-3} of
Definition~\ref{def:quasi} remains verified as no vertex of $R$ is
removed from neither $C_a$ or $C_b$ in the last step of their
construction. Finally, observe that thanks to the last step of the
constructions, there are no $C_x$-isolated ($x\in\{a,b\}$) vertices in
$L$ anymore. Moreover, this step has not created any $C_x$-isolated
vertices in $R$. Indeed, the vertices which are added, did not belong
to $C_x^*$, and hence their neighbour in $R$ did. This completes the
proof of the validity of both constructions $C_a$ and $C_b$.

\vspace{0.2cm}
Let us now determine a lower bound on the cardinality of $(L\cup
R)\setminus C_x$, for $x\in\{a,b\}$. Taking into account that
$|L_1|\le|R_1|$, we obtain:
\begin{align*}
    |(L\cup R) \setminus C_a| & \geq |L_1|+|L_2|+|R_1|+|R_2|-|C_a|\\
    & \geq \frac{|R_1|}{2} + \frac{|R_2|}{2} -\min\left\{\frac{|L_1|}{2},\frac{|R_2|}{2}\right\}
\end{align*}

Thus, both following equations hold:
\begin{align}
  |(L\cup R)\setminus C_a| &\geq \frac{|R_1|}{2}+\frac{|R_2|}{2}-\frac{|L_1|}{2} \geq \frac{|R_2|}{2}  \label{eqn:Ca-1}\\
  |(L\cup R)\setminus C_a| & \geq \frac{|R_1|}{2}+
  \frac{|R_2|}{2}-\frac{|R_2|}{2} = \frac{|R_1|}{2}\geq
  \frac{|L_1|}{2}\label{eqn:Ca-2}
\end{align}

Similarly,

\begin{align}
\begin{split}
    |(L\cup R) \setminus C_b| & \geq |L_1|+|L_2|+|R_1|+|R_2|-|C_b|\\
    & \geq |L_2|+\frac{|R_1|}{2}-\frac{|R_2|}{2}\\
    & \geq |L_2|+\frac{|L_1|}{2}-\frac{|R_2|}{2}\\
    & = |L|-\frac{|L_1|}{2}-\frac{|R_2|}{2}\label{eqn:Cb}
\end{split}
\end{align}

Hence intuitively, the previous equations show that our two codes fit
to two different situations: $C_a$ is useful when either $|L_1|$ or
$|R_2|$ is large enough compared to $|L|$, whereas $C_b$ is useful
when $|L_1|+|R_2|$ is small enough compared to $|L|$. Let $C
\in\{C_a,C_b\}$ be the code having the minimum cardinality. Then,
using inequalities~\eqref{eqn:Ca-1}, \eqref{eqn:Ca-2}
and~\eqref{eqn:Cb} and denoting $b = {\tfrac{\max\big\{|L_1|,
    |R_2|\big\}}{|L|}}$ we get:

\begin{align*}
  |(L\cup R) \setminus C| & \geq \max\left\{\frac{|L_1|}{2}, \frac{|R_2|}{2}, |L|-\frac{|L_1|}{2}-\frac{|R_2|}{2} \right\}\\
  & = \frac{|L|}{2}\cdot\max\left\{\frac{|L_1|}{|L|}, \frac{|R_2|}{|L|}, 2-\frac{|L_1|+|R_2|}{|L|}\right\}\\
& \geq \frac{|L|}{2}\cdot\max\left\{\frac{\max\left\{|L_1|, |R_2|\right\}}{|L|}, 2-\frac{2\cdot\max\left\{|L_1|,|R_2|\right\}}{|L|}\right\}\\
& = \frac{|L|}{2}\cdot \max\left\{b, 2-2b\right\}\\
& \geq \frac{|L|}{2}\cdot\min\limits_{b\geq 0}\left\{\max\left\{b,
2-2b\right\}\right\}
\end{align*}

Note that $\min\limits_{b\geq 0}\left\{\max\left\{b,
    2-2b\right\}\right\}=\tfrac{2}{3}$. Hence, we get:

$$
|(L\cup R) \setminus
C|\geq\frac{|L|}{2}\cdot\frac{2}{3}=\frac{|L|}{3}
$$

Note that equality in the previous inequality is achieved when
$|L_1|=|R_1|=|R_2|=2|L_2|$.

Putting $L' = (L\cup R) \setminus C$, we obtain the claim of the
lemma.
\end{proof}

\subsection{The main result}\label{subsec:mainresult}

We are now ready to prove the main theorem of this paper. The proof
has been sketched in Algorithm~\ref{alg:global_alg}, we now provide
all the details.

\begin{theorem}\label{th:upperbound_notriangle-new}
  Let $G$ be a connected identifiable triangle-free graph
  on $n$ vertices with maximum degree~$\Delta\geq 3$. Then $\M(G)\leq
  n-\frac{n}{\Delta+\tfrac{3\Delta}{\ln\Delta-1}}=n-\frac{n}{\Delta+o(\Delta)}$.
\end{theorem}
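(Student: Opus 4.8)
The plan is to assemble the pieces developed in the previous subsections exactly as sketched in Algorithm~\ref{alg:global_alg}, splitting into two regimes according to how many vertices of $G$ have a false twin. Let $X$ denote the set of vertices having at least one false twin, and let $Y=V(G)\setminus X$ be the set of twin-free vertices. I would fix a threshold of the form $|X|\geq\alpha n$ versus $|X|<\alpha n$ for a suitable $\alpha$ (to be optimized at the end), so that the two cases together yield the claimed bound.

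\textbf{The ``many false twins'' case.} If $X$ is large, I would invoke Proposition~\ref{lemm:falsetwincode}, which gives an identifying code of size at most $n-|\mathcal{F}|$, where $\mathcal{F}$ is the family of nontrivial false-twin classes. Since each nontrivial class contributes at least one vertex to $X$ beyond a single representative, one gets $|\mathcal{F}|\geq|X|/\Delta$ (a false-twin class consists of vertices sharing a common neighbourhood, so its size is at most $\Delta$, giving at least $|X|/\Delta$ classes). Hence $\M(G)\leq n-|X|/\Delta\leq n-\alpha n/\Delta$, which already beats the target as long as $\alpha$ is chosen large enough. The degenerate graphs $P_3$ and $C_4$ excluded in Proposition~\ref{lemm:falsetwincode} must be checked by hand, but for these small fixed graphs the bound holds trivially.

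\textbf{The ``few false twins'' case.} Here I would run the main construction. First apply Lemma~\ref{lemm:goodIS} to $G[Y]$ to obtain an independent set $S$ with $|S|\geq\tfrac{\ln\Delta-1}{\Delta}|Y|$ satisfying the degree-$1$ separation property. Then define $R$ as the set of endpoints of the isolated edges in $G[V\setminus S]$ (these satisfy the degree-$\geq 2$ condition of a strong induced matching $M$), set $L=N(R)\setminus R$, and verify via Observation~\ref{obs:LRM-twinfree} and the twin-freeness of $Y$ that $L$ contains no false twins. Apply Lemma~\ref{lemma:LR} to build an $(L,R)$-quasi-identifying code $C_1$ with no $C_1$-isolated vertices and with $|(L\cup R)\setminus C_1|\geq|L|/3$. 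Take $C_2=(V\setminus(L\cup R))\setminus S$ together with the neighbours of $L$ outside $L\cup R$, which is a $(V\setminus(L\cup R))$-identifying code by Proposition~\ref{prop:IScode} applied to the remaining independent set $S\setminus(L\cup R)$. Proposition~\ref{prop:well-id} then certifies that $C=C_1\cup C_2$ is a genuine identifying code of $G$.

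\textbf{The counting step, which is the crux.} The final task is to bound $|C|=n-|V\setminus C|$ from above by bounding the number of non-code vertices from below. The non-code vertices are the vertices of $S$ lying outside $L\cup R$ together with the $\geq|L|/3$ vertices of $(L\cup R)\setminus C_1$. Because $R$ consists of endpoints of isolated edges of $G[V\setminus S]$, every vertex of $L$ lies in $S$, so the vertices of $S$ ``lost'' to the matching are exactly $S\cap(L\cup R)=S\cap L=L$ (as $R\cap S=\emptyset$). Thus the total number of non-code vertices is at least $(|S|-|L|)+|L|/3=|S|-\tfrac{2}{3}|L|$, and since $|L|\leq|S|$ one has to be careful: the naive bound degrades when $|L|$ is close to $|S|$. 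The main obstacle is precisely controlling the trade-off between $|S|$ and $|L|$; the resolution is that when $|L|$ is large, the $|L|/3$ saved inside the matching compensates, while when $|L|$ is small the surviving part of $S$ dominates. Combining, one shows $|V\setminus C|\geq\tfrac{1}{3}|S|\geq\tfrac{1}{3}\cdot\tfrac{\ln\Delta-1}{\Delta}|Y|$, and using $|Y|\geq(1-\alpha)n$ from the case hypothesis gives $\M(G)\leq n-\tfrac{(\ln\Delta-1)(1-\alpha)}{3\Delta}n$. Finally I would optimize $\alpha$ by balancing the two regimes' guarantees $\tfrac{\alpha}{\Delta}$ and $\tfrac{(\ln\Delta-1)(1-\alpha)}{3\Delta}$; equating them yields the constant hidden in the denominator $\Delta+\tfrac{3\Delta}{\ln\Delta-1}$, matching the stated bound and confirming the $o(\Delta)$ error term since $\tfrac{3\Delta}{\ln\Delta-1}=o(\Delta)$.
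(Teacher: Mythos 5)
Your proposal is correct and follows essentially the same route as the paper's proof: the same case split on the number of twin-free vertices (your optimization of $\alpha$ recovers exactly the paper's threshold $|Y|\geq\tfrac{3n}{\ln\Delta+2}$), the same use of Proposition~\ref{lemm:falsetwincode} with $|\mathcal{F}|\geq|X|/\Delta$ in one regime, and the same assembly of Lemma~\ref{lemm:goodIS}, Lemma~\ref{lemma:LR} and Proposition~\ref{prop:well-id} with the counting $|V\setminus C|\geq|S|-\tfrac{2}{3}|L|\geq\tfrac{|S|}{3}$ in the other.
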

\begin{proof}
Let $\mathcal{F}=\{F_1,\ldots,F_{|\mathcal{F}|}\}$ be the set of all
nontrivial equivalence classes over the false twin relation $\equiv$
over $V(G)$. Let $X=\cup_{i=1}^{|\mathcal{F}|}F_i$ and $Y=V(G)\setminus
X$. We distinguish two cases.
\vspace{0.5cm}

\textbf{Case 1:} $|Y|\geq\tfrac{3n}{\ln\Delta+2}$.\\ In this case, let
$S$ be an independent set of $G[Y]$ given by Lemma~\ref{lemm:goodIS}:
we have $|S|\geq
\tfrac{\ln\Delta-1}{\Delta}|Y|\geq\tfrac{3n(\ln\Delta-1)}{\Delta(\ln\Delta+2)}$. Consider
all pairs $u,v$ of vertices of $G$ such that $u$ and $v$ are adjacent,
both $u$ and $v$ have degree at least~2, and all the vertices of
$N(u)\cup N(v)\setminus\{u,v\}$ belong to $S$ (see
Figure~\ref{fig:upperbnd_bnd_tfree-1} for an illustration). Since all
neighbours of $u$ and $v$ (except $u$ and $v$ themselves) are in $S$,
these neighbours form an independent set. Let $M$ be the (possibly
empty) set of all edges $uv$ such that $u$ and $v$ form such a
pair. By the previous remark, $M$ is a strong induced matching of
$G$. Let us denote $L=L(M)$ and $R=R(M)$. Note that we have
$L(M)\subseteq S$.

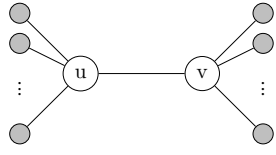
\begin{figure}
\centering
\scalebox{0.8}{\begin{tikzpicture}
\path (-1,1) node[draw,shape=circle,fill=lightgray] (a1) {};
\path (-1,0.5) node[draw,shape=circle,fill=lightgray] (a2) {};
\path (-1,-1) node[draw,shape=circle,fill=lightgray] (a3) {};
\path (3,1) node[draw,shape=circle,fill=lightgray] (b1) {};
\path (3,0.5) node[draw,shape=circle,fill=lightgray] (b2) {};
\path (3,-1) node[draw,shape=circle,fill=lightgray] (b3) {};
\path (0,0) node[draw,shape=circle] (u) {u};
\path (2,0) node[draw,shape=circle] (v) {v};
\draw (a1) -- (u)
      (a2) -- (u)
      (a3) -- (u)
      (v) -- (b1)
      (v) -- (b2)
      (v) -- (b3)
      (u) -- (v);
\draw (-1,-0.25) node[rotate=90] {...}
       (3,-0.25) node[rotate=90] {...};
\end{tikzpicture}}
\caption{Vertices $u,v$ with $(N(u)\cup N(v))\setminus\{u,v\}\subseteq S$}
\label{fig:upperbnd_bnd_tfree-1}
\end{figure}

Let us now partition $V(G)$ into two subsets of vertices: $L\cup R$ on the one hand, and
$V(G)\setminus(L\cup R)$ on the other hand. Such a partition is
illustrated in Figure~\ref{fig:partitionLR}. Note that $G[L\cup R]$ is
identifiable by Observation~\ref{obs:LRM-twinfree}. Let us show that
$G[V(G)\setminus(L\cup R)]$ is also identifiable. By contradiction,
suppose it is not the case and let $u,v$ be a pair of vertices such
that $B_{G[V(G)\setminus(L\cup R)]}(u)=B_{G[V(G)\setminus(L\cup
    R)]}(v)$. Vertices $u$ and $v$ are therefore adjacent, and since
$G$ is triangle-free, neither $u$ nor $v$ has other neighbours within
$G[V(G)\setminus(L\cup R)]$. Since $G$ is identifiable, at least one
of them has a neighbour in $L$. Suppose they both have a neighbour in
$L$. Then by construction of $S$, $u$ and $v$ both do not belong to
$S$. But then $u$ and $v$ should belong to $R$, a contradiction. Thus,
one of them, say $u$, has degree~1 in $G$, and all neighbours of $v$
belong to $L\subseteq S$. But by the first property of
$S$ in Lemma~\ref{lemm:goodIS}, at least one vertex at distance~2 of $u$ does
not belong to $S$, a contradiction.

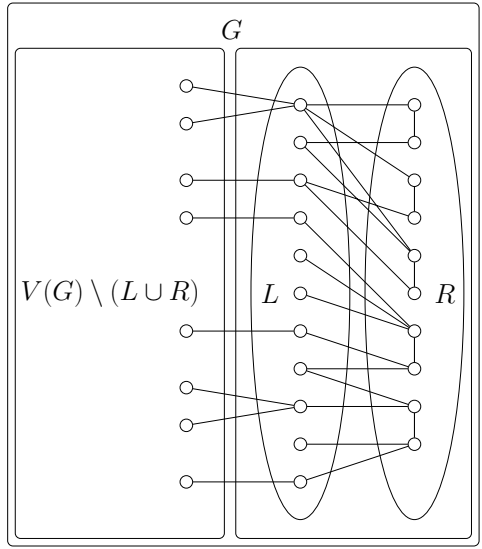
\begin{figure}
\centering
\scalebox{0.5}{\begin{tikzpicture}
\path (0,0) node[draw,shape=circle] (l21) {};
\path (0,1) node[draw,shape=circle] (l22) {};
\path (0,2) node[draw,shape=circle] (l23) {};
\path (0,3) node[draw,shape=circle] (l24) {};
\path (0,4) node[draw,shape=circle] (l25) {};
\path (0,5) node[draw,shape=circle] (l26) {};
\path (0,6) node[draw,shape=circle] (l27) {};
\path (0,7) node[draw,shape=circle] (l28) {};
\path (0,8) node[draw,shape=circle] (l11) {};
\path (0,9) node[draw,shape=circle] (l12) {};
\path (0,10) node[draw,shape=circle] (l13) {};

\path (3,1) node[draw,shape=circle] (r21) {};
\path (3,2) node[draw,shape=circle] (r22) {};
\path (3,3) node[draw,shape=circle] (r23) {};
\path (3,4) node[draw,shape=circle] (r24) {};
\path (3,5) node[draw,shape=circle] (r25) {};
\path (3,6) node[draw,shape=circle] (r26) {};
\path (3,7) node[draw,shape=circle] (r11) {};
\path (3,8) node[draw,shape=circle] (r12) {};
\path (3,9) node[draw,shape=circle] (r13) {};
\path (3,10) node[draw,shape=circle] (r14) {};

\path (-3,0) node[draw,shape=circle] (m0) {};
\path (-3,1.5) node[draw,shape=circle] (m1) {};
\path (-3,2.5) node[draw,shape=circle] (m2) {};
\path (-3,4) node[draw,shape=circle] (m3) {};
\path (-3,7) node[draw,shape=circle] (m4) {};
\path (-3,8) node[draw,shape=circle] (m5) {};
\path (-3,9.5) node[draw,shape=circle] (m6) {};
\path (-3,10.5) node[draw,shape=circle] (m7) {};

\draw (l21) -- (r21)
      (l22) -- (r21)
      (l23) -- (r22)
      (l24) -- (r22)
      (l24) -- (r23)
      (l25) -- (r23)
      (l26) -- (r24)
      (l27) -- (r24)
      (l28) -- (r24)
      (l11) -- (r25)
      (l12) -- (r26)
      (l13) -- (r26)
      (l11) -- (r11)
      (l12) -- (r13)
      (l13) -- (r12)
      (l13) -- (r14);
\draw (l13) -- (m7);
\draw (l13) -- (m6);
\draw (l11) -- (m5);
\draw (l28) -- (m4);
\draw (l25) -- (m3);
\draw (l23) -- (m2);
\draw (l23) -- (m1);
\draw (l21) -- (m0);

\draw[thick]
      (r21) -- (r22)
      (r23) -- (r24)
      (r25) -- (r26)
      (r11) -- (r12)
      (r13) -- (r14);
\draw (0,5.0) ellipse (1.3cm and 6.0cm)
      (3,5.0) ellipse (1.3cm and 6.0cm)
      (-0.8,5) node {{\huge$L$}}
      (3.8,5) node {{\huge$R$}}
      (-5,5) node {{\huge$V(G)\setminus(L\cup R)$}}
      (-1.8,12) node {{\huge$G$}};
\draw[rounded corners] (-7.7,-1.7) rectangle (4.7,12.7);
\draw[rounded corners] (-7.5,-1.5) rectangle (-2,11.5);
\draw[rounded corners] (-1.7,-1.5) rectangle (4.5,11.5);
\end{tikzpicture}}
\caption{Partition of $V(G)$}
\label{fig:partitionLR}
\end{figure}

We will now build two subsets $C_1\subseteq L\cup R$ and
$C_2\subseteq V(G)\setminus(L\cup R)$ such that $C=C_1\cup C_2$ is an
identifying code of $G$.

\begin{itemize}
\item \textbf{Building $C_1\subseteq L\cup R$.}\\
  If $L\cup R=\emptyset$ we take $C_1=\emptyset$. Otherwise, we build
  $C_1$ using Lemma~\ref{lemma:LR}: applying it to $G$ and $M$, we
  know that there exists an $(L,R)$-quasi-identifying code $C_1$ of
  $G$ without $C_1$-isolated vertices. From Lemma~\ref{lemma:LR} we
  also know that $|L'|\geq\tfrac{|L|}{3}$, where $L'=(L\cup
  R)\setminus C_1$.

\item \textbf{Building $C_2\subseteq V(G)\setminus(L\cup R)$.}\\
  Again if $V(G)\setminus(L\cup R)=\emptyset$ we take $C_2=\emptyset$. Otherwise, we take $C_2$ to be the complement of $S$ in
$V(G)\setminus(L\cup R)$: $C_2=\left(V(G)\setminus (L\cup
R)\right)\setminus S$. Let us show that $C_2$ is a
$\big(V(G)\setminus(L\cup R)\big)$-identifying code of $G$.

First, recall that $G'=G[V(G)\setminus(L\cup R)]$ is
identifiable. Note that $S$ does not contain any vertex $v$ which is
isolated in $G'$. Indeed, $G$ does not contain any isolated vertex,
hence if $v$ is isolated in $G'$, $v$ has a neighbour in $L$. But
$L\subseteq S$, a contradiction since $S$ is an independent set. We
also claim that for each vertex $v$ of degree~1 in $G'$, there is a
vertex at distance~2 of $v$ in $G'$ not belonging to $S$. Let $w$ be
the unique neighbour of $v$ in $G'$. If $v$ is also of degree~1 in
$G$, since $G'$ has no pair of twins, by the first property of $S$ in
Lemma~\ref{lemm:goodIS}, $w$ must have a neighbour $x$ not in
$S$. Vertex $x$ cannot belong to $L$, hence it belongs to $G'$ and we
are done. Now, if $v$ is not of degree~1 in $G$, all its neighbours in
$G$ other than $w$ belong to $L$. But since $G'$ is identifiable, $w$
has at least one neighbour other than $v$, belonging to $G'$ but not
to $S$, since otherwise $v$ and $w$ would belong to set $R$. Finally,
by construction of $G'$, there are no isolated edges in
$G[V(G')\setminus S]$.

Under these conditions we can apply Proposition~\ref{prop:IScode} on
$G'$ and on set $S$ restricted to $V(G')$, which shows that $C_2$ is a
$\big(V(G)\setminus(L\cup R)\big)$-identifying code of $G$.
\end{itemize}

We now have an $(L,R)$-quasi-identifying code $C_1$ of $G$ without
$C_1$-isolated vertices, and showed that $C_2$ is a
$(V(G)\setminus(L\cup R))$-identifying code of $G$. Moreover, $S$ does
not contain any pair of false twins. Furthermore, since $C_2$ is the
complement of $S$ in $G[V(G)\setminus(L\cup R)]$, all neighbours of
$L$ in $G[V(G)\setminus(L\cup R)]$ belong to $C_2$. Therefore, we can
apply Proposition~\ref{prop:well-id} and $C=C_1\cup C_2$ is an
identifying code of $G$.

Let us now upper-bound the size of $C$. To this end, we lower-bound
the size of its complement. From the construction of $C_1$ and $C_2$,
we have $V(G)\setminus C=(S\setminus L)\cup L'$.

Since $L\subseteq S$ and $|L'|\geq\tfrac{|L|}{3}$, we have
$|(S\setminus L)\cup L'|\geq\tfrac{|S|}{3}$.

Hence, we get: 
\begin{eqnarray*}
|V(G)\setminus C| & \geq & \tfrac{|S|}{3}\\
& \geq & \tfrac{\ln\Delta-1}{\Delta(\ln\Delta+2)}n\\
& = & \tfrac{n}{\Delta\tfrac{\ln\Delta+2}{\ln\Delta-1}}\\
& = & \tfrac{n}{\Delta+\tfrac{3\Delta}{\ln\Delta -1}}
\end{eqnarray*}

Hence, $|C|\leq n-\tfrac{n}{\Delta+\tfrac{3\Delta}{\ln\Delta -1}}$.

\vspace{0.5cm} \textbf{Case 2:} $|Y|\leq\tfrac{3n}{\ln\Delta+2}$.\\ Then,
$|X|\geq n-\tfrac{3n}{\ln\Delta+2}$. Since each set of $\mathcal{F}$ has size at most $\Delta$, we have:
\begin{eqnarray*}
 |\mathcal{F}| & \geq & \tfrac{|X|}{\Delta}\\
 & \geq & \tfrac{\ln\Delta-1}{\Delta(\ln\Delta+2)}n\\
 & = & \tfrac{n}{\Delta+\tfrac{3\Delta}{\ln\Delta-1}}
\end{eqnarray*}
Since $\Delta\geq 3$, $G$ is not isomorphic to $C_4$ and we can apply
Proposition~\ref{lemm:falsetwincode}: $G$ has an identifying code of
size at most $n-|\mathcal{F}|\leq
n-\tfrac{n}{\Delta+\tfrac{3\Delta}{\ln\Delta-1}}$.
\end{proof}

\section{Improved bounds for subclasses of triangle-free graphs}\label{sec:families}

\subsection{A generalized bound and an application to graphs of bounded chromatic number}

It can be noted that the value of the bound of
Theorem~\ref{th:upperbound_notriangle-new} heavily relies on
Corollary~\ref{cor:shearer}. For large values of~$\Delta$, this bound
is nearly optimal~\cite{S83}. However, directly using the slightly
stronger original bound of J.~Shearer (Theorem~\ref{thm:shearer}) or a
stronger bound holding for some particular class of graphs, one could
obtain a strengthened result as follows. Let $G$ be a nontrivial
connected identifiable triangle-free graph on $n$ vertices having
maximum degree~$\Delta$. Suppose each subgraph $H$ of $G$ has an
independent set of size at least $f(\Delta)|V(H)|$. Let
$f'(\Delta)=\min\left\{\tfrac{1}{3},f(\Delta)\right\}$. Then, the value
$\tfrac{\ln\Delta -1}{\Delta}$ in Lemma~\ref{lemm:goodIS} can be
replaced by $f'(\Delta)$, and the condition for
applying Case~1 in the proof of
Theorem~\ref{th:upperbound_notriangle-new} can be replaced by
$|Y|\geq\tfrac{3n}{\Delta f'(\Delta)+3}$. We
then get the following theorem:

\begin{theorem}\label{th:upperbound_notriangle-general}
  Let $G$ be a nontrivial connected identifiable triangle-free graph
  on $n$ vertices with maximum degree~$\Delta$ such that each subgraph
  $H$ of $G$ has an independent set of size at least
  $f(\Delta)|V(H)|$.  Let
  $f'(\Delta)=\min\left\{\tfrac{1}{3},f(\Delta)\right\}$. Then
  $\M(G)\leq n-\frac{n}{\Delta+\tfrac{3}{f'(\Delta)}}$.
\end{theorem}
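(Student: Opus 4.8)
The plan is to re-run the proof of Theorem~\ref{th:upperbound_notriangle-new} essentially verbatim, tracking precisely where Corollary~\ref{cor:shearer} was invoked and substituting the hypothesis on $f$ in its place. The only point at which the Shearer bound entered the argument was in the construction of Lemma~\ref{lemm:goodIS}, so the first step is to re-derive that lemma with $\tfrac{\ln\Delta-1}{\Delta}$ replaced by $f'(\Delta)$, leaving everything else in the chain of auxiliary results untouched.

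Recall that in Lemma~\ref{lemm:goodIS}, the set $Y$ is split as $Y=Y_1\cup Y_2$ and the independent set is built as $S=S_1\cup S_2$ with $S_1\subseteq Y_1$ and $S_2\subseteq Y_2$, where $|S_1|\geq\tfrac{|Y_1|}{3}$ comes from handling the degree-1 vertices, and $S_2$ was previously obtained from Corollary~\ref{cor:shearer}. Instead, I would apply the hypothesis directly to the induced subgraph $G[Y_2]$ (which is a subgraph of $G$), obtaining an independent set $S_2$ of $G[Y_2]$ with $|S_2|\geq f(\Delta)|Y_2|$. Since $|S_1|\geq\tfrac{|Y_1|}{3}\geq f'(\Delta)|Y_1|$ and $|S_2|\geq f(\Delta)|Y_2|\geq f'(\Delta)|Y_2|$ by the definition $f'(\Delta)=\min\{\tfrac{1}{3},f(\Delta)\}$, summing yields $|S|\geq f'(\Delta)|Y|$, which is the analogue of property~\ref{lemm:goodIS-prop-2}. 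Property~\ref{lemm:goodIS-prop-1} is unaffected by this substitution.

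With the modified lemma in hand, I would repeat the case analysis of Theorem~\ref{th:upperbound_notriangle-new}, now splitting on whether $|Y|\geq\tfrac{3n}{\Delta f'(\Delta)+3}$ (Case~1) or not (Case~2). In Case~1, the construction of the strong induced matching $M$, the codes $C_1$ (via Lemma~\ref{lemma:LR}) and $C_2$, and the applications of Propositions~\ref{prop:IScode} and~\ref{prop:well-id} all go through unchanged, giving $V(G)\setminus C=(S\setminus L)\cup L'$ and $|V(G)\setminus C|\geq\tfrac{|S|}{3}\geq\tfrac{f'(\Delta)}{3}\cdot\tfrac{3n}{\Delta f'(\Delta)+3}=\tfrac{n}{\Delta+3/f'(\Delta)}$. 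In Case~2, the bound $|X|\geq n-\tfrac{3n}{\Delta f'(\Delta)+3}$ gives $|\mathcal{F}|\geq\tfrac{|X|}{\Delta}=\tfrac{n}{\Delta+3/f'(\Delta)}$, and Proposition~\ref{lemm:falsetwincode} completes the bound.

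Since the structural heart of the argument --- the matching, the quasi-identifying codes, and the separation proofs --- is entirely independent of \emph{which} lower bound on independent sets is used, I do not expect a genuine obstacle here; the real content is verifying that both cases collapse to the common quantity $\tfrac{n}{\Delta+3/f'(\Delta)}$ once the threshold is chosen as $\tfrac{3n}{\Delta f'(\Delta)+3}$. The one subtlety worth flagging is that the minimum in the definition of $f'(\Delta)$ is genuinely necessary: the degree-1 contribution forces the factor $\tfrac{1}{3}$, so when $f(\Delta)>\tfrac{1}{3}$ one cannot claim $|S|\geq f(\Delta)|Y|$ and must settle for $f'(\Delta)$.
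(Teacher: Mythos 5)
Your proposal is correct and matches the paper's own (implicitly sketched) proof exactly: the paper likewise observes that Corollary~\ref{cor:shearer} enters only through Lemma~\ref{lemm:goodIS}, replaces $\tfrac{\ln\Delta-1}{\Delta}$ by $f'(\Delta)$ there, and changes the Case~1 threshold to $|Y|\geq\tfrac{3n}{\Delta f'(\Delta)+3}$. Your arithmetic in both cases, and your remark on why the minimum with $\tfrac{1}{3}$ is needed, are all accurate.
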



It is an easy observation that any $k$-colourable graph has an
independent set of size at least $\tfrac{n}{k}$, and any subgraph of a
$k$-colourable graph is $k$-colourable. Hence we can apply
Theorem~\ref{th:upperbound_notriangle-general} to $k$-colourable
triangle-free graphs. Examples of large classes of graphs with bounded
chromatic number are for example: bipartite graphs, graphs of bounded
degeneracy, graphs having no $K_\ell$-minor~\cite{K84}, or graphs of
bounded genus~\cite{H90} --- in particular, planar triangle-free
graphs are 3-colourable following Grötzsch's theorem~\cite{G59}. We
get the following corollary:

\begin{corollary}\label{th:bip-planar}
  Let $G$ be a nontrivial connected identifiable triangle-free graph
  on $n$ vertices with maximum degree~$\Delta$ and chromatic number
  $\chi(G)$. Then $\M(G)\leq
  n-\frac{n}{\Delta+3\max\{3,\chi(G)\}}$. In particular:
\begin{itemize}
\item If $G$ is bipartite or planar, $\M(G)\leq n-\frac{n}{\Delta+9}$.
\item If $G$ is $k$-degenerate, $\M(G)\leq n-\frac{n}{\Delta+3(k+1)}$.\footnote{It is a well-known fact that a $k$-degenerate graph is $(k+1)$-colourable.}
\item If $G$ has no $K_\ell$-minor, $\M(G)\leq n-\frac{n}{\Delta+3c_1(\ell)}$, where $c_1(\ell)$ depends only on $\ell$.\footnote{It was conjectured by Hadwiger that $c_1(\ell)\leq\ell-1$~\cite{H43}, which would be optimal. However it is known that $c_1(\ell)=O(\ell\sqrt{\ln(\ell)})$~\cite{K84}.}
\item If $G$ has genus $g(G)=g$, $\M(G)\leq n-\frac{n}{\Delta+3c_2(g)}$, where $c_2(g)$ depends only on $g$.\footnote{A theorem of Heawood states that $c_2(g)\leq \left\lceil\tfrac{7+\sqrt{1+48g}}{2}\right\rceil$~\cite{H90}.}
\end{itemize}
\end{corollary}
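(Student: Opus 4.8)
The plan is to obtain Corollary~\ref{th:bip-planar} as a direct specialization of Theorem~\ref{th:upperbound_notriangle-general}, whose hypothesis only asks for a \emph{hereditary} lower bound on the independence number. First I would invoke the two elementary facts stated just before the corollary: any $k$-colourable graph on $m$ vertices contains an independent set of size at least $m/k$ (a largest colour class works), and $k$-colourability is inherited by every subgraph. Applying these with $k=\chi(G)$, every subgraph $H$ of $G$ satisfies $\chi(H)\leq\chi(G)$ and hence contains an independent set of size at least $|V(H)|/\chi(G)$. Thus the hypothesis of Theorem~\ref{th:upperbound_notriangle-general} holds with the constant function $f(\Delta)=1/\chi(G)$.

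Next I would compute the resulting bound. Taking $f(\Delta)=1/\chi(G)$ gives $f'(\Delta)=\min\{\tfrac13,\tfrac{1}{\chi(G)}\}=\tfrac{1}{\max\{3,\chi(G)\}}$, using the identity $\min\{1/a,1/b\}=1/\max\{a,b\}$ for positive $a,b$. Substituting into the conclusion $\M(G)\leq n-\tfrac{n}{\Delta+3/f'(\Delta)}$ of Theorem~\ref{th:upperbound_notriangle-general}, the quantity $3/f'(\Delta)$ becomes $3\max\{3,\chi(G)\}$, which yields exactly the announced general bound $\M(G)\leq n-\tfrac{n}{\Delta+3\max\{3,\chi(G)\}}$.

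The four itemized consequences then follow by inserting known upper bounds on $\chi(G)$. For bipartite $G$ we have $\chi(G)\leq 2$, and for triangle-free planar $G$, Grötzsch's theorem gives $\chi(G)\leq 3$; in both cases $\max\{3,\chi(G)\}=3$, giving the bound with $\Delta+9$. For a $k$-degenerate graph, the standard fact $\chi(G)\leq k+1$ produces $\Delta+3(k+1)$; for a $K_\ell$-minor-free graph one substitutes the bound $\chi(G)\leq c_1(\ell)$, and for a graph of genus $g$ one substitutes the Heawood bound $\chi(G)\leq c_2(g)$. Each case is a one-line substitution of the chromatic estimate into the general bound.

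There is no real analytic obstacle here, since all the difficult work is already carried out in Theorem~\ref{th:upperbound_notriangle-general}. The only point requiring attention is that its independence hypothesis is demanded for \emph{every} subgraph $H$ of $G$, not merely for $G$ itself; this is precisely what makes the chromatic-number route succeed, since $\chi$ is monotone under taking subgraphs, whereas a single large independent set in $G$ would not automatically transfer to the induced subgraphs appearing in the construction underlying Lemma~\ref{lemm:goodIS}.
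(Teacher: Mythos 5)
Your proposal is correct and follows exactly the paper's route: the corollary is obtained by feeding the hereditary bound $f(\Delta)=1/\chi(G)$ (every subgraph of a $k$-colourable graph is $k$-colourable and has a colour class of size at least $|V(H)|/k$) into Theorem~\ref{th:upperbound_notriangle-general}, computing $f'(\Delta)=1/\max\{3,\chi(G)\}$, and then specializing via the standard chromatic bounds for each listed class. Nothing is missing.
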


\subsection{Graphs having no false twins}
Let $G$ be a triangle-free graph without any pair of false twins. By
considering Case~1 of the proof of
Theorem~\ref{th:upperbound_notriangle-new}, we have $Y=V(G)$, which
leads to the following bound:

\begin{theorem}\label{thm:nofalsetwins}
Let $G$ be a nontrivial connected identifiable graph $G$ on $n$
vertices having maximum degree~$\Delta$ and no pair of false
twins. Then $\M(G)\leq
n-\tfrac{n}{\tfrac{3\Delta}{\ln\Delta-1}}=n-\tfrac{n}{o(\Delta)}$.
\end{theorem}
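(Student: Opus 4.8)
The plan is to observe that the ``no false twins'' hypothesis collapses the case distinction in the proof of Theorem~\ref{th:upperbound_notriangle-new} and lets us run its first case unconditionally. First I would note that since $G$ contains no pair of false twins, the false twin relation $\equiv$ has no nontrivial equivalence class; hence the set $X$ of vertices lying in such classes is empty, so $Y=V(G)\setminus X=V(G)$ and $|Y|=n$. In particular we never invoke Proposition~\ref{lemm:falsetwincode} (Case~2), which was the branch designed to exploit many false twins: the entire argument proceeds through Case~1. (As implicitly required by Corollary~\ref{cor:shearer} and Lemma~\ref{lemm:goodIS}, one takes $\Delta\geq 3$ so that $\ln\Delta-1>0$.)

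Next I would reproduce the construction of Case~1, now with $Y=V(G)$. Concretely, I would apply Lemma~\ref{lemm:goodIS} to obtain an independent set $S$ of $G[Y]=G$ with $|S|\geq\tfrac{\ln\Delta-1}{\Delta}|Y|=\tfrac{\ln\Delta-1}{\Delta}n$ satisfying the degree-one/distance-two property, form the strong induced matching $M$ of all edges $uv$ whose endpoints have degree at least~$2$ and satisfy $(N(u)\cup N(v))\setminus\{u,v\}\subseteq S$, and set $L=L(M)$, $R=R(M)$. Then I would build $C_1$ via Lemma~\ref{lemma:LR}, an $(L,R)$-quasi-identifying code with no $C_1$-isolated vertices and with $L'=(L\cup R)\setminus C_1$ satisfying $|L'|\geq\tfrac{|L|}{3}$, together with $C_2=(V(G)\setminus(L\cup R))\setminus S$. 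Exactly as in the main proof, Proposition~\ref{prop:IScode} shows that $C_2$ is a $(V(G)\setminus(L\cup R))$-identifying code and Proposition~\ref{prop:well-id} shows that $C=C_1\cup C_2$ is an identifying code of $G$.

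Finally I would lower-bound the complement $V(G)\setminus C=(S\setminus L)\cup L'$. Since $L\subseteq S$ and $|L'|\geq\tfrac{|L|}{3}$, we have $|(S\setminus L)\cup L'|\geq\tfrac{|S|}{3}$, whence
\begin{equation*}
|V(G)\setminus C|\;\geq\;\frac{|S|}{3}\;\geq\;\frac{\ln\Delta-1}{3\Delta}\,n\;=\;\frac{n}{\tfrac{3\Delta}{\ln\Delta-1}},
\end{equation*}
so $\M(G)\leq |C|\leq n-\tfrac{n}{3\Delta/(\ln\Delta-1)}$. Since $\tfrac{3\Delta}{\ln\Delta-1}=o(\Delta)$, this is precisely the claimed bound, and it improves on Theorem~\ref{th:upperbound_notriangle-new} because the denominator loses the additive $\Delta$ term.

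As for difficulty, there is essentially no new obstacle here: the whole point is that the absence of false twins forces $X=\emptyset$, so the threshold condition separating Case~1 from Case~2 becomes vacuous and Case~1 applies to every such~$G$. The only thing to double-check is that every ingredient of Case~1 — Lemma~\ref{lemm:goodIS}, the verification that $G[V(G)\setminus(L\cup R)]$ is identifiable, and the applicability of Propositions~\ref{prop:IScode} and~\ref{prop:well-id} — holds without using the hypothesis $|Y|\geq\tfrac{3n}{\ln\Delta+2}$. Inspection of the main proof confirms that this hypothesis was needed only to compare $|S|$ against the Case~2 bound and plays no role in the construction itself, so the argument goes through unchanged with the stronger estimate $|S|\geq\tfrac{\ln\Delta-1}{\Delta}n$.
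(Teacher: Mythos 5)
Your proposal is correct and matches the paper's own argument: the paper proves this theorem precisely by observing that the absence of false twins forces $Y=V(G)$, so Case~1 of the proof of Theorem~\ref{th:upperbound_notriangle-new} applies with $|S|\geq\tfrac{\ln\Delta-1}{\Delta}n$, yielding $|V(G)\setminus C|\geq\tfrac{|S|}{3}\geq\tfrac{n}{3\Delta/(\ln\Delta-1)}$. Your only addition is to make explicit the (correct) check that the threshold $|Y|\geq\tfrac{3n}{\ln\Delta+2}$ plays no role in the construction itself, and the standing hypotheses (triangle-free, $\Delta\geq 3$) that the paper leaves implicit in the theorem statement.
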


Hence any class of connected triangle-free graphs of maximum
degree~$\Delta$ having its minimum identifying code of size at least
$n-\tfrac{n}{\Theta(\Delta)}$ should contain false twins. Note that
this is the case of the complete $(\Delta-1)$-ary tree already
mentioned in the introduction (all its leaves are false twins), and of
the classes of graphs described in~\cite{F09} (which are built using
copies of small complete bipartite graphs $K_{\Delta,\Delta}$ joined
to each other, and therefore contain many false twins).

\subsection{Graphs of girth at least~5}
In this paper, we have considered triangle-free graphs, that is,
graphs of girth at least~4. It is natural to ask whether much stronger
bounds on parameter $\M$ hold for graphs of larger girth. However note
that the answer to this question is negative because of the complete
$(\Delta-1)$-ary tree on $n$ vertices $T$, which was already mentioned
earlier. This graph has infinite girth and $\M(T)=\lceil
n-\tfrac{n}{\Delta-1+1/\Delta}\rceil$~\cite{BCHL05}.

However, with an additional condition on the minimum degree of the
graph, the question was answered in the positive in~\cite{F09} and
recently in~\cite{FP11}, where the following bounds are given.

\begin{theorem}[\cite{F09}]\label{thm:mindeg2girth5}
Let $G$ be a connected identifiable graph on $n$ vertices having minimum
degree at least~2 and girth at least~5. Then $\M(G)\le\tfrac{7n}{8}+1$.
\end{theorem}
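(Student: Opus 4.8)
The plan is to realise the code as the complement of a deleted independent set, exactly as in Proposition~\ref{prop:IScode}. First I would record that girth at least~$5$ together with minimum degree at least~$2$ forces $G$ to have neither twins nor false twins: two false twins would need either two common neighbours (a $4$-cycle) or two vertices of degree~$1$, and two twins would need a triangle or an isolated edge, all of which are excluded. Consequently, for \emph{any} independent set $S$ the first three conditions of Proposition~\ref{prop:IScode} hold automatically (there are no isolated vertices, no false twins, and no degree-$1$ vertices). The whole problem therefore reduces to the following purely structural statement: there exists an independent set $S$ with $|S|\ge\tfrac n8-1$ such that $G[V\setminus S]$ contains no isolated edge; then $C=V\setminus S$ is an identifying code of the required size.

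The next step is to understand the only remaining obstruction. If $uv$ is an isolated edge of $G[V\setminus S]$, then $\deg(u),\deg(v)\ge 2$ force a neighbour $a\in N(u)\cap S$ and a neighbour $b\in N(v)\cap S$; since $a,b\in S$ are non-adjacent and, by triangle-freeness, $a\neq b$, the vertices $a$ and $b$ are joined by the path $a\,u\,v\,b$. Hence every isolated edge exhibits two elements of $S$ at distance at most~$3$. This already gives a clean sufficient condition---any set $S$ with pairwise distance at least~$4$ is safe---which settles the sparse instances: on a cycle one deletes every fourth vertex, obtaining density~$\tfrac14$. Dually, I would record the observation that when the minimum degree is at least~$3$ one may simply take $S=N(u_0)$ for a vertex $u_0$ of maximum degree: it is independent by triangle-freeness, the vertex $u_0$ becomes isolated, and girth~$5$ guarantees that every other vertex loses at most one neighbour and thus keeps degree at least~$2$ in $G[V\setminus S]$, so no isolated edge is created.

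Combining these two regimes into a single construction yielding the uniform density~$\tfrac18$ is the technical heart of the proof. I would build $S$ greedily, deleting one vertex at a time and charging it to a bounded neighbourhood: in sparse regions the minimum-degree-$2$ hypothesis always provides a vertex whose deletion is safe and whose charge is at most~$8$, while in dense regions a partial neighbourhood deletion contributes many vertices to $S$ at once; girth~$5$ keeps distinct deletions from sharing a \emph{fragile} vertex (one of complement-degree exactly~$1$), so that local safety is preserved globally. The main obstacle I anticipate is precisely this amortised density bound: one must guarantee that $\tfrac18$ of the vertices can be removed in \emph{every} such graph, and the tight regime is that of cubic girth-$5$ graphs, where the independence number is only about $\tfrac{5n}{14}$ and the complement is easily made fragile; the extremal configurations attaining $\tfrac{7n}{8}$ are built from $C_5$-type gadgets, and the additive $+1$ absorbs a single global correction needed to repair the last isolated edge. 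A careful case analysis on the degree sequence, guided by the two observations above, is what I expect to be required to close the argument.
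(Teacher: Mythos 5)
Your reduction is sound: girth at least~$5$ plus minimum degree at least~$2$ does rule out twins and false twins, so conditions 1--3 of Proposition~\ref{prop:IScode} are automatic, and the theorem would indeed follow from the purely combinatorial claim that every such graph has an independent set $S$ with $|S|\ge\tfrac n8-1$ whose complement induces no isolated edge. (Note that the paper itself does not prove this theorem; it imports it from~\cite{F09}, so there is no in-paper argument to compare against.)

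The gap is that this combinatorial claim --- which is the entire content of the theorem --- is never established. The two observations you offer do not come close to density $\tfrac18$ on their own: a set of vertices pairwise at distance at least~$4$ can have density as small as roughly $1/\bigl(1+\Delta+\Delta(\Delta-1)+\Delta(\Delta-1)^2\bigr)$, which is far below $\tfrac18$ already for $\Delta=3$; and taking $S=N(u_0)$ yields only $\Delta$ vertices, a vanishing fraction of $n$ unless the construction is iterated, at which point the deletions interact and the ``no isolated edge'' property is no longer preserved by the argument you give. The greedy scheme meant to interpolate between these regimes is described only at the level of intentions (``a vertex whose deletion is safe and whose charge is at most~$8$'', ``girth~$5$ keeps distinct deletions from sharing a fragile vertex''): no invariant is defined, no charging rule is specified, and no case analysis is carried out, and you say yourself that this is ``what I expect to be required to close the argument.'' Until that amortised density bound is actually proved --- including a treatment of the extremal configurations and of the additive $+1$ --- the proposal is a correct reformulation of the theorem rather than a proof of it.
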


\begin{theorem}[\cite{FP11}]\label{thm:girth5}
  Let $G$ be an identifiable graph on $n$ vertices having minimum
  degree~$\delta\geq 1$ and girth at least~5. Then
  $\M(G)\leq(\tfrac{3}{2}+o_\delta(1))\frac{\ln\delta}{\delta}n$,
  where $o_\delta(1)$ is a function of $\delta$ tending to~$0$ when
  $\delta$ tends to infinity.
\end{theorem}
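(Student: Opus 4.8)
The plan is to avoid reasoning about individual pairs of vertices and instead reduce the identifying-code requirement to a pure domination condition that the girth-$5$ hypothesis makes sufficient, and then to construct a small set meeting that condition by the probabilistic method with alteration. First I would record the structural consequences of girth at least~$5$: since $G$ is both triangle-free and $C_4$-free, any two distinct vertices have at most one common neighbour, so two adjacent vertices have none and two vertices at distance~$2$ have exactly one. This pins down the symmetric differences $B(u)\oplus B(v)$ completely, and, applied to a vertex of maximum degree, it forces $\Delta\le (n-1)/\delta$ (the at least $\Delta(\delta-1)$ vertices of its second neighbourhood are pairwise distinct), whence $\delta\le\sqrt{n-1}$. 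This last inequality is the feature of girth-$5$ graphs that I expect to do the real work.

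The conceptual core is the following sufficient condition: \emph{in a graph of girth at least~$5$, any set $C$ in which every vertex has at least two neighbours (a total $2$-dominating set) is an identifying code.} Indeed, such a $C$ is dominating, so pairs at distance $\ge 3$ have disjoint balls and are separated; for an edge $uv$, the vertex $u$ has two code-neighbours, at least one of which, say $z$, differs from $v$, and by girth $z$ is not adjacent to $v$, so $z\in B(u)\setminus B(v)$ separates them; for $u,v$ at distance~$2$ with unique common neighbour $w$, the vertex $u$ again has a code-neighbour $z\neq w$, which by girth is not adjacent to $v$, so $z\in B(u)\setminus B(v)$. Hence it suffices to bound from above the minimum size of a total $2$-dominating set of $G$, which replaces the (potentially quadratic) family of separation constraints by just $n$ constraints, one per vertex.

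Then I would bound that size probabilistically. Put each vertex into $A$ independently with probability $p=\tfrac32\cdot\tfrac{\ln\delta}{\delta}$; the expected size is $\tfrac32\tfrac{\ln\delta}{\delta}n$. Call a vertex \emph{deficient} if it has fewer than two neighbours in $A$, and form $C$ from $A$ by adding, for each deficient vertex, up to two of its neighbours; then $C$ is total $2$-dominating by construction. For a vertex $v$ of degree $d\ge\delta$ one has $\Pr[v\text{ deficient}]=(1-p)^{d-1}\bigl(1+(d-1)p\bigr)$, and because $d\ge\delta>1/p$ the map $d\mapsto (1-p)^{d-1}(1+(d-1)p)$ is decreasing on the relevant range, so each term is at most of order $\ln\delta\cdot\delta^{-3/2}$. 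Summing over the $n$ vertices bounds the expected number of added vertices by $O\!\bigl(n\,\ln\delta\,\delta^{-3/2}\bigr)=o\!\bigl(\tfrac{\ln\delta}{\delta}n\bigr)$, so by linearity of expectation there is an outcome with $|C|\le\bigl(\tfrac32+o_\delta(1)\bigr)\tfrac{\ln\delta}{\delta}n$, giving the claimed bound on $\M(G)$.

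The hard part, and the reason for proceeding through total $2$-domination rather than directly, is the accounting. A naive argument that builds a dominating set and then repairs each unseparated pair is dangerous: in girth-$5$ graphs the number of distance-$2$ pairs can be $\Theta(n^2)$ (for instance in incidence graphs of projective planes, which have degree $\approx\sqrt n$), so repairing pairs one at a time threatens to overwhelm the main term. Reducing to a vertex-indexed condition removes this danger, since the only quantities entering the final estimate are vertex degrees; and the girth-$5$ inequality $\Delta\le (n-1)/\delta$, equivalently $\delta\le\sqrt n$, is exactly what guarantees that the alteration cost stays below the order $\tfrac{\ln\delta}{\delta}n$. The constant $\tfrac32$ is merely the density chosen for $A$; any fixed density exceeding $\tfrac{\ln\delta}{\delta}$ makes the deficiency sum lower order, so the value $\tfrac32$ reflects the strength of this clean argument rather than an intrinsic barrier.
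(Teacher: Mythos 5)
This statement is quoted from~\cite{FP11}; the paper you were given contains no proof of it, so there is nothing internal to compare your argument against, and I can only assess it on its own merits and against what is known of the cited source. Your central reduction --- in a graph of girth at least~$5$, any set $C$ such that every vertex has at least two neighbours in $C$ is an identifying code --- is correct (your three-way case analysis by distance is complete, using that two vertices have at most one common neighbour), and it is essentially the route taken in~\cite{FP11}, which relates identifying codes of girth-$5$ graphs to double total domination and then bounds the latter probabilistically. The alteration computation is also sound: $\Pr[v\text{ deficient}]=(1-p)^{d-1}(1+(d-1)p)$ is indeed decreasing in $d$, and the resulting repair cost $O(n\ln\delta\cdot\delta^{-3/2})$ is $o(n\ln\delta/\delta)$.

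Three remarks. First, your argument genuinely requires $\delta\ge 2$: a degree-$1$ vertex can never acquire two code-neighbours, and the separation of a pendant vertex from its support needs separate treatment. The statement as transcribed here says $\delta\ge 1$, but at $\delta=1$ the displayed bound degenerates to~$0$ and is false (consider the star $K_{1,n-1}$), so this is a defect of the transcription rather than of your proof; still, you should state the restriction explicitly. Second, your claim that the inequality $\delta\le\sqrt{n-1}$ ``does the real work'' is a misdiagnosis of your own argument: once you have reduced to a vertex-indexed condition, nothing in your estimate uses it --- only the minimum degree enters --- and the paragraph about distance-$2$ pairs being $\Theta(n^2)$ is motivation for the reduction, not a step of the proof. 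Third, and this is the point you should double-check most carefully: your accounting works for any density $p=c\ln\delta/\delta$ with $c>1$ (the repair term is $O(n\ln\delta\cdot\delta^{-c})=o(n\ln\delta/\delta)$ whenever $c>1$), so your argument actually yields the constant $1+o_\delta(1)$ rather than $\tfrac32$. Proving more than the statement is not an error, but when a blind reconstruction beats the published constant by a clean argument, the right response is to suspect a gap; having searched for one and not found it, I would still urge you to compare your reduction lemma line by line with the one in~\cite{FP11} before claiming the improvement.
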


Note that these two bounds are much stronger than any bound of the
form $n-\tfrac{n}{\Theta(\Delta)}$, such as the one of
Conjecture~\ref{conj}. They are best possible in the sense that relaxing
either the condition on girth~5 or minimum degree~2, there are graphs
which have much larger identifying codes. If one drops the minimum
degree~2 condition, such a graph is the complete $(\Delta-1)$-ary
tree. If one drops the girth~5 condition, there are $\Delta$-regular
graphs ($\Delta\geq 2$) having girth~4 and their minimum identifying
code of size~$n-\tfrac{n}{\Theta(\Delta)}$~\cite{F09}. We would like
to refer the interested reader to~\cite{FP11}, where this question is
studied in more detail.

\subsection{Summary of all results}
We summarize the bounds discussed in this paper in Table~\ref{table}.

\begin{table}[ht!]
\centering
\footnotesize{
\begin{tabular}{lcc}
\hline
& &\\[-0.2cm]
Graph class & Upper bound on $\M$ & Reference\\[0.1cm]
\hline\hline
& &\\[-0.2cm]
Triangle-free & $n-\frac{n}{\Delta+\tfrac{3\Delta}{\ln\Delta-1}}$ & Theorem~\ref{th:upperbound_notriangle-new}\\[0.5cm]
Bipartite & $n-\frac{n}{\Delta+9}$ & Corollary~\ref{th:bip-planar}\\[0.4cm]
Planar triangle-free & $n-\frac{n}{\Delta+9}$ & Corollary~\ref{th:bip-planar}\\[0.4cm]
Triangle-free without false twins & $n-\tfrac{n}{\tfrac{3\Delta}{\ln\Delta-1}}$ & Theorem~\ref{thm:nofalsetwins}\\[0.5cm]
Minimum degree~2, girth at least~5 & $\tfrac{7n}{8}+1$ & Theorem~\ref{thm:mindeg2girth5}~\cite{F09}\\[0.4cm]
Minimum degree~$\delta$, girth at least~5 & $\left(\tfrac{3}{2}+o_\delta(1)\right)\frac{\ln\delta}{\delta}n$ & Theorem~\ref{thm:girth5}~\cite{FP11}\\[0.2cm]
\hline
\end{tabular}}
\caption{Upper bounds in subclasses of connected identifiable graphs on $n$ vertices with maximum degree~$\Delta$}\label{table}
\end{table}



\section{On the complexity of finding a small identifying code}\label{sec:rem}

We note that our proofs provide a polynomial-time algorithm to compute
the identifying codes of
Theorem~\ref{th:upperbound_notriangle-new}. Indeed, their
constructions are based on the codes computed in
Lemmas~\ref{lemma:LRdeg2}, and~\ref{lemma:LR}, and the independent set
of Lemma~\ref{lemm:goodIS} for the first code, and on the construction
of Proposition~\ref{lemm:falsetwincode} for the second code. All these
constructions are described in the corresponding proofs and can be
done in polynomial time. Let us give an explicit complexity bound.

We observe that the running time of the constructions is at most of
the order $O(n^2\ln n)$. Indeed, the most difficult step is to compute
and compare the neighbourhoods of the vertices in order to build the
false twin equivalence classes in the proof of
Theorem~\ref{th:upperbound_notriangle-new}. To do this one can
represent each neighbourhood as a binary word of length $n$. Bitwise
comparing two of them requires $O(n)$ operations, hence a classical
sorting algorithm can sort them all in time $O(n^2\ln n)$. Comparing
them takes $O(n^2)$ time. Moreover, the construction of the
independent set of Lemma~\ref{lemm:goodIS} is based on
Theorem~\ref{thm:shearer} given in~\cite{S83}. There, the author
gives a randomized linear-time algorithm for computing the independent
set. Note that the random (constant-time) step of this algorithm can
be turned into a deterministic linear-time computation, which leads to
an $O(n^2)$ algorithm. All other steps and constructions can also be
done in time $O(n^2)$. Hence, we have the following theorem.

\begin{theorem}
  Let $G$ be a connected identifiable triangle-free graph
  on $n$ vertices with maximum degree~$\Delta\geq 3$. Then, an identifying
  code of $G$ having cardinality at most
  $n-\tfrac{n}{\Delta+\tfrac{3\Delta}{\ln\Delta-1}}$ can be computed
  in time $O(n^2\ln n)$.
\end{theorem}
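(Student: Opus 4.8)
The plan is to walk through Algorithm~\ref{alg:global_alg} line by line, verify that each step admits a polynomial-time implementation, and show that the total running time is dominated by a single $O(n^2\ln n)$ subroutine. Since the code produced by Algorithm~\ref{alg:global_alg} is exactly the one whose cardinality is bounded in Theorem~\ref{th:upperbound_notriangle-new}, establishing the time bound for this algorithm suffices.

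First I would analyse the preprocessing common to both branches: computing the false-twin equivalence classes, which determines the sets $X$, $Y$ and the family $\mathcal{F}$, and hence which branch applies. I expect this to be the bottleneck. Representing each neighbourhood $N(v)$ as a bit-vector of length $n$, a single pairwise comparison costs $O(n)$; lexicographically sorting the $n$ vectors with a comparison-based sort therefore costs $O(n^2\ln n)$, after which equal neighbourhoods (i.e.\ false twins) occupy consecutive blocks and the classes are read off in $O(n^2)$ additional time. This is the only step whose cost exceeds $O(n^2)$.

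Next I would handle the two branches. In the ``big $X$'' case, the code of Proposition~\ref{lemm:falsetwincode} is obtained by deleting one representative per nontrivial class from $V(G)$; once the classes are known this is $O(n)$. In the ``small $X$'' case the work is: (i)~building the independent set $S$ of Lemma~\ref{lemm:goodIS}; (ii)~extracting the strong induced matching $M$ and the sets $L=L(M)$, $R=R(M)$; (iii)~constructing the quasi-identifying code $C_1$ via Lemmas~\ref{lemma:LRdeg2} and~\ref{lemma:LR}; and (iv)~setting $C_2=(V\setminus(L\cup R))\setminus S$. For (i) I would invoke the linear-time algorithm underlying Shearer's Theorem~\ref{thm:shearer}, replacing its single randomized constant-time choice by an equivalent deterministic rule; together with the degree-one handling of Lemma~\ref{lemm:goodIS}, which is a linear scan, this gives $O(n^2)$. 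For (ii), each edge $uv$ is tested for the condition $N(u)\cup N(v)\setminus\{u,v\}\subseteq S$ in $O(n)$ time, so $O(n^2)$ overall. The constructions of (iii) reduce to building auxiliary (multi)graphs on $L\cup R$, orienting them, computing spanning trees and performing bottom-up in-degree swaps together with a few local exchanges, all linear in the size of $G[L\cup R]$; and (iv) is a single set-complement. Hence (iii)--(iv) cost $O(n^2)$.

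Summing the contributions, every step runs in $O(n^2)$ time except the false-twin classification, which costs $O(n^2\ln n)$, so the whole procedure runs in $O(n^2\ln n)$ and outputs an identifying code of the claimed size. The main obstacle I anticipate is the careful accounting for step~(iii): I must confirm that the spanning-tree traversals, re-orientations, and the $C_a$/$C_b$ case analysis of Lemma~\ref{lemma:LR}, together with the derandomisation of the independent-set routine, never exceed the $O(n^2)$ budget, so that the single sorting step remains the dominant cost.
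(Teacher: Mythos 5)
Your proposal is correct and follows essentially the same route as the paper: both identify the false-twin classification (sorting the $n$ neighbourhood bit-vectors at $O(n)$ per comparison, hence $O(n^2\ln n)$) as the unique super-quadratic step, both derandomise Shearer's linear-time independent-set routine to get $O(n^2)$ for Lemma~\ref{lemm:goodIS}, and both bound every remaining construction by $O(n^2)$. Your version is simply a more detailed, step-by-step accounting of the same argument.
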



\end{document}